%
%
\documentclass[10pt, twocolumn]{IEEEtran}
\usepackage[dvips]{graphicx}
\usepackage{subfigure}
\usepackage{cite}
\usepackage{epsfig, psfrag, amsmath, amssymb, amsfonts, latexsym, eucal, balance, hyperref,indentfirst,bookmark}
\vfuzz2pt 
\hfuzz2pt 
\newtheorem{thm}{Theorem}
\newtheorem{cor}{Corollary}

\newtheorem{prop}{Proposition}

\begin{document}
\title{Managing Interference Correlation Through Random Medium Access}
\author{{Yi Zhong, Wenyi Zhang, \emph{Senior Member, IEEE} and
 Martin Haenggi, \emph{Senior Member, IEEE}}
 \thanks{Y. Zhong and W. Zhang are with Department of Electronic Engineering and Information Science, University of Science and Technology of China, Hefei 230027, China (email: geners@mail.ustc.edu.cn, wenyizha@ustc.edu.cn).
 M. Haenggi is with Department of Electrical Engineering, University of Notre Dame, Notre Dame, IN 46556, USA (email: mhaenggi@nd.edu).
The research has been supported by the National Basic
Research Program of China (973 Program) through grant
2012CB316004, National Natural Science Foundation of
China through grant 61071095, MIIT of China through
grant 2011ZX03001-006-01, and by the US NSF through grant CCF 1216407.
}}
\maketitle
\thispagestyle{empty}
\begin{abstract}
The capacity of wireless networks is fundamentally limited by interference. However, little research has focused on the interference correlation, which may greatly increase the local delay (namely the number of time slots required for a node to successfully transmit a packet). This paper focuses on the question whether increasing randomness in the MAC, specifically frequency-hopping multiple access (FHMA) and ALOHA, helps to reduce the effect of interference correlation. We derive closed-form results for the mean and variance of the local delay for the two MAC protocols and evaluate the optimal parameters that minimize the mean local delay. Based on the optimal parameters, we identify two operating regimes, the correlation-limited regime and the bandwidth-limited regime. Our results reveal that while the mean local delays for FHMA with $N$ sub-bands and for ALOHA with transmit probability $p$ essentially coincide when $p=\frac{1}{N}$, a fundamental discrepancy exists between their variances.
We also discuss implications from the analysis, including an interesting mean delay-jitter tradeoff, and convenient bounds on the tail probability of the local delay, which shed useful insights into system design.
\end{abstract}
\begin{IEEEkeywords}
ALOHA, frequency-hopping, interference correlation, local delay, Poisson point process, stochastic geometry.
\end{IEEEkeywords}
\setcounter{page}{1}

\section{Introduction}
\subsection{Motivation}
A main limitation to the capacity of wireless communication systems is interference, which depends upon a number of factors, including the locations of interfering transmitters.
The issue of interference has been studied extensively in the literature; however, much less attention has been paid to the topic of interference correlation until recently.
Interference correlation generally captures the fact that the interference created by interfering transmitters is a correlated stochastic process both spatially and temporally.
It is well recognized that correlated fading reduces the performance gain in multi-antenna communications \cite{chuah2002capacity}.
Likewise, it has recently been also proved that interference correlation decreases the diversity gain \cite{net:Haenggi12cl}\cite{net:Haenggi14twc}.
Interference correlation partially comes come from correlated channel attenuation, like correlated fading and shadowing,
but more importantly, such correlation stems from the spatial distribution of transmitters and the MAC protocols since they determine the locations and the active pattern of the interferers, which then determine the structure of the interference.
The lines of recent research can be divided into three categories based on different configurations for the receiver:
\begin{itemize}
\item \textbf{Correlation between different time slots:}
Assume that the receiver is equipped with a single antenna.
This line of research explored the interference correlation at the same receiver between different time slots.
Related works include \cite{schilcher2011temporal, ganti2009spatial, net:Haenggi12tit, net:Haenggi14twc}. 
\item \textbf{Correlation between different receive antennas:}
Assume that the receiver is equipped with co-located multiple antennas.
The correlation between different antennas exists because the interferences received by different antennas come from the same source of transmitters. Related works include \cite{net:Haenggi12cl}.
\item \textbf{Correlation between different receivers:}
This refers to the interference correlation between different receivers which are separated (a few wavelengths apart).
Since the network may make use of relay and cooperative transmission, it is necessary to consider this type of interference correlation for an accurate analysis. Related works include \cite{ganti2009spatial}.
\end{itemize}

In this work, we focus on the interference correlation between different time slots at the same receiver, i.e., the temporal correlation.
The interference power constitutes a stochastic process, wherein the randomness comes from three sources: the spatial distribution of nodes, the fading and the MAC.
The interferences at two different time slots are correlated because they come from correlated sets of transmitters and the fading, shadowing and traffic may also be correlated.
In this paper, we only focus on the correlation caused by the spatial distribution of transmitters and the MAC, assuming that fading and shadowing are independent.
This type of correlation brings about the fact that if transmission fails in a previous time slot, there is a significant probability that the subsequent transmission will also fail in the next few time slots \cite{net:Haenggi14twc}\cite{ganti2009spatial}.
Thus a simple retransmission mechanism may not be an effective method.
The most direct impact of this type of correlation is the increase of the local delay.
Local delay is defined as the number of time slots required by a node to successfully transmit a packet to its next-hop node\footnote{The definition of local delay in our work is consistent with \cite{baccelli2010new}. In some other works, like \cite{net:Haenggi12tit}, the local delay denotes the \emph{mean} number of time slots required to successfully transmit a packet.}.

As a motivating example, consider a spatial network without mobility or fading and without a MAC coordinating.
Hence the interference power experienced by a receiver remains fixed for all time slots; it is a randomly variable uniquely determined by the spatial distribution of nodes.
The local delay, as a random variable, in that extreme case is two-valued: either one frame (good realization of the spatial distribution of nodes) or infinite (bad realization of the spatial distribution of nodes).
In this case, the transmission success events are fully correlated (one success implies success in each time slot, and vice versa), and the mean local delay is infinite.

In view of this, we consider some forms of man-made randomization by introducing MAC dynamics to reduce the interference correlation.
The following analysis will be carried out in parallel under two different kinds of MAC protocols:
\begin{itemize}
\item \textbf{FHMA (Frequency-hopping multiple access):}
FHMA is implemented by simply dividing the entire frequency band into $N$ sub-bands and letting each transmitter independently choose a sub-band uniformly randomly in each time slot.
We focus on slow frequency-hopping, i.e., hopping at the time scale of a time slot, not at the time scale of a symbol.
There are three benefits by splitting the entire frequency bands into sub-bands.
First and foremost, it increases the uncertainty in the active pattern of interfering nodes, thereby reducing the effect of interference correlation.
Second, the interference for a given transmission is also reduced because the intensity of the interfering transmitters are scaled by $\frac{1}{N}$.
Third, the noise power is also scaled by $\frac{1}{N}$ since each transmission occurs in a narrow sub-band.
Meanwhile, on the other side, splitting into sub-bands scales down the rate.
\item \textbf{ALOHA:}
In ALOHA, if a packet is to be transmitted during a time slot, the packet will only be transmitted with a certain probability using the entire frequency band. Decreasing the transmit probability increases the uncertainty in the active pattern of interfering nodes and reduces the interference, while the noise power will not be reduced.
Meanwhile, transmitting probabilistically scales down the rate.
\end{itemize}

Since FHMA is often viewed as a spread-spectrum technique, we briefly comment on DS-CDMA.
For synchronous orthogonal CDMA like those using Walsh codes, a receiver can in theory completely reject arbitrarily strong signals from interfering transmitters using different spreading sequences; thus, only those transmitters using the same spreading sequence as the desired link will cause interference.
If the spreading sequence is randomly chosen for each transmission, the analysis and results of the local delay are exactly the same as that for FHMA.
For asynchronous CDMA using pseudo-noise (PN) sequences, the interference comes from all transmitters and is usually approximated as Gaussian noise in the literature.
The works in \cite{net:Andrews07commag} and \cite{weber2005transmission} have discussed the difference between asynchronous CDMA and FHMA in terms of outage probability and throughput.
In asynchronous CDMA, although the desired signal is increased by the processing gain, the interference still comes from all transmitters.
Therefore, the analysis of the local delay is similar as that for FHMA with $N=1$, i.e., no bandwidth splitting is employed.
We will show that in this case the distribution of the local delay has a heavy tail, which results in an infinite mean local delay.

\subsection{Related Works}
Recently, the tools from stochastic geometry \cite{stoyan1987stochastic} have been used extensively in modeling and analysis of wireless communication systems; see, e.g., \cite{haenggi2009stochastic, haenggi2009interference, haenggi2012stochastic, baccelli2009stochastic } and references therein.
This mathematical framework permits the derivation of closed-form results for various system metrics and makes it possible to evaluate the interference correlation.
A number of works considering the related problems are as follows.
In \cite{ganti2009spatial} the authors evaluated the spatio-temporal correlation coefficient of the interference and the joint probability of success in ALOHA networks, and in \cite{schilcher2011temporal} the authors calculated the correlation coefficient of interference under different assumptions of dependence.
The framework for the analysis of the local delay was provided in \cite{net:Haenggi12tit}\cite{baccelli2010new}\cite{net:Haenggi10allerton}\cite{net:Haenggi10icc}, where different scenarios were considered and it was observed that the mean local delay may be infinite under certain system parameters.
The work in \cite{net:Gong13twc} extended the results to the case of finite mobility.
In \cite{gulati2012characterizing}, a new model, which characterizes different degrees of temporal dependence, was proposed to evaluate the local delay by using joint interference statistics.
In \cite{net:Zhang12twc}, the optimal power control policies for different fading statistics were proposed to minimize the mean local delay.
All the above works are based on the Poisson point process (PPP) model, while the work in \cite{alfano2011spatial} analyzed the local delay in clustered networks.

\subsection{Contributions}
In this work, we focus on the question that whether increasing randomness in the MAC helps reduce the local delay.
We apply the so-called Poisson bipolar model (see \cite[Sec. 5.3]{haenggi2012stochastic}), and derive the mean and variance of the local delay under FHMA and ALOHA.
Based on the mean and variance of the local delay we have derived, we explore the essential difference between the two MAC protocols.
We also evaluate the optimal number of sub-bands for FHMA and the optimal transmit probability for ALOHA that minimize the mean local delay.
The issue of optimizing the number of sub-bands was also considered in \cite{jindal2008bandwidth}, where the optimal number of sub-bands is derived to maximize the number of concurrent transmissions.
However, such outage-based framework used in \cite{jindal2008bandwidth} cannot capture the effects of correlated interference.
In the last part of our work, we evaluate the mean delay-jitter tradeoff and the bounds on the tail probability of the local delay, both of which are critical issues for the system design.

Our results reveal that the means of the local delay of the two protocols, FHMA and ALOHA, coincide when the number of sub-bands $N$ in FHMA is equal to the reciprocal of the transmit probability $p$ in ALOHA (with thermal noise ignored).
However, the variances of the local delay for the two protocols are drastically different: when $p=\frac{1}{N}$ and $N\rightarrow\infty$, the variance in FHMA converges to a constant which is typically small, while in ALOHA the variance scales as $\Theta(N^2)$.
Moreover, we calculate bounds on the complementary cumulative distribution function (ccdf) of the local delay when no MAC dynamic is introduced. In that case, the distribution of the local delay has a heavy tail, which results in an infinite mean local delay.
By employing the MAC randomness of either FHMA or ALOHA, the ccdf of the local delay will decay fast, and the mean local delay will then be finite.
This observation reveals the underlying mechanism why even such simple MAC protocols can greatly reduce the local delay.

The remaining part of this paper is organized as follows. Section \ref{sec:model} describes the network model and the MAC protocols. Section \ref{sec:mean_var} then establishes the main analytical results of this paper, including the mean and variance of the local delay for FHMA and for ALOHA. Section \ref{sec:opt_par} evaluates the optimal number of sub-bands for FHMA and the optimal transmit probability for ALOHA that minimize the mean local delay. Section \ref{sec:opt_thd} evaluates the optimal SINR threshold that minimizes the mean local delay. Section \ref{sec:design} presents the mean delay-jitter tradeoff and the bounds on the tail probability of the local delay, and Section \ref{sec:conclusion} offers the concluding remarks.

\section{System Model}
\label{sec:model}
\subsection{Network Model}
To obtain the most essential features, we consider the widely used Poisson bipolar model.
In this model, the locations of the transmitters are modeled as a PPP $\Phi=\{x_i\}\subset\mathbb{R}^d$ of intensity $\lambda$.
Each transmitter is associated with one receiver which is at a fixed distance $r_0$ to the corresponding transmitter.
In the analysis, we will condition on a particular desired transmitter $x_0\in\Phi$, and denote by $r_0=|x_0|$ the distance from this transmitter to the origin where the receiver resides.
Such conditioning is equivalent to adding the point $x_0$ to the PPP and guarantees that the link between $x_0$ and the origin is a typical link, in the sense that this link behaves statistically the same as all other links (see \cite[Ch. 8]{haenggi2012stochastic}).

\begin{figure}[!ht]
\centering
\includegraphics[width=0.5\textwidth]{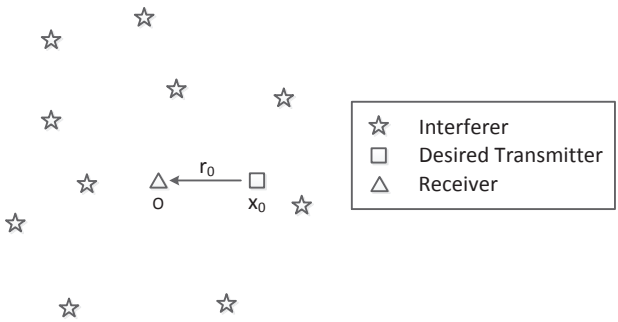}
\caption{Spatial distribution of different network entities.}
\label{fig:model1}
\end{figure}

We assume that the time is divided into discrete slots with equal duration.
Each transmission attempt occupies one time slot, and if a transmission fails in a certain time slot, a retransmission will be conducted.
The local delay is defined as the number of time slots until a packet is successfully received \cite{net:Haenggi12tit}\cite{baccelli2010new}.
In this paper, we assume fully backlogged nodes so that whenever a node is scheduled to access the channel it always has data to transmit. The local delay is thus basically the transmission delay, but not the queueing delay.

For the propagation model, we consider the common path loss $l(r)=\kappa r^{-\alpha}$, where $\alpha$ is the path loss exponent and $\kappa$ is a constant.
We will further discuss the effect of bounded path loss $l(r)=\kappa (r^{\alpha}+\varepsilon)^{-1}$ in the subsection \ref{sec:bounded}.
We assume that the power fading coefficients are spatially and temporally independent with exponential distribution of unit mean (i.e., Rayleigh fading), and let $h_{k,x}$ be the fading coefficient between transmitter $x$ and the considered receiver located at origin $o$ in time slot $k$.
Without loss of generality, we assume that all transmitters transmit at a normalized power level of unity.
This constant power assumption is consistent with the bipolar network model, in which all link distances are
identical.
The thermal noise is assumed to be white Gaussian with power spectral density $N_r$.
To simplify the notations, we introduce the normalized noise power spectral density as $N_0=N_r/\kappa$.

We assume that the SINR threshold model is applied.
That is, for each time-frequency resource block, as long as the SINR is above a threshold $\theta$, it can be successfully used for information transmission at spectral efficiency $\log_2(1+\theta)$ bits per second per Hz.
We also assume that a packet of a fixed size needs exactly one time slot to be transmitted if it is allocated the entire frequency band $W$ under SINR threshold $\theta$ and successfully transmitted in that time slot.
In that way, in the FHMA case if the entire frequency band is split into $N$ sub-bands, a packet will need $N$ successful time slots.
Meanwhile, in the ALOHA case, each active transmission will make use of the entire frequency band; thus, only one successful time slot is needed.
Notice that the local delay is measured by the number of time slots.
Since different system configurations may apply different durations of time slot,
we should normalize the local delay so that the actual delays of different system configurations can be compared fairly.
The duration of each time slot is proportional to $\frac{1}{\log_2(1+\theta)}$ because the size of a packet is fixed and the spectral efficiency is proportional to $\log_2(1+\theta)$.
Therefore, when comparing the actual delays under different SINR thresholds $\theta$, we normalize the local delay by $\frac{1}{\log_2(1+\theta)}$ as the metric.

In static or moderately mobile network, the locations of the transmitters during all time slots are deemed to be correlated, resulting in the temporal interference correlation.
This type of correlation decreases the successful probability for retransmissions if the first transmission attempt failed, thus increasing the local delay.
In order to reduce the effect of interference correlation, we study two kinds of MAC randomness described as follows.

\subsection{FHMA}
In the FHMA case, we assume that the total frequency band $W$ is divided into $N$ sub-bands and each transmitter chooses a sub-band uniformly randomly, independently of the location and the time slot (i.e., memoryless both spatially and temporally).
Let $s\in\mathbb{S}=\{1,2,\cdots,N\}$ be the sub-band index, and let $\mathcal{S}_k(x)\in\mathbb{S}$ denote the index of the sub-band used by node $x\in\Phi$ in time slot $k$.
With these notations, the interference at the typical receiver located at the origin $o$ in time slot $k$ is given by
\begin{equation}
I_k=\sum_{x\in\Phi\backslash\{x_0\}}h_{k,x}\kappa|x|^{-\alpha}\mathbf{1}(\mathcal{S}_k(x)=\mathcal{S}_k(x_0)),
\end{equation}
where $\mathbf{1}(\cdot)$ is the indicator function and $|x|$ denotes the distance between $x$ and the origin $o$.
Note that the exclusion of $x_0$ from the sum over the point process does not imply that $x_0\notin\Phi$, but it ensures that when we condition on $x_0\in\Phi$, the power received from this node is not counted as interference.

Besides reducing the interference and breaking the correlation, introducing FHMA has the additional benefit that the noise power decreases from $W\kappa N_0$ to $\frac{W}{N}\kappa N_0$. By taking this noise scaling into consideration,
we obtain the SINR of the typical receiver in time slot $k$ as
\begin{eqnarray}
\mathrm{SINR}_k=\qquad\qquad\qquad\qquad\qquad\qquad\quad\qquad\nonumber\\
\frac{h_{k,x_0}r_0^{-\alpha}}{\frac{WN_0}{N}+\sum_{x\in\Phi\backslash\{x_0\}}h_{k,x}|x|^{-\alpha}\mathbf{1}(\mathcal{S}_k(x)=\mathcal{S}_k(x_0))}.
\end{eqnarray}
\subsection{ALOHA}
In the ALOHA case, let $\Phi_k$ be the transmitting set in time slot $k$.
The interference at the typical receiver located at the origin $o$ in time slot $k$ is
\begin{equation}
I_k=\sum_{x\in\Phi\backslash\{x_0\}}h_{k,x}\kappa|x|^{-\alpha}\mathbf{1}(x\in\Phi_k).
\end{equation}

Unlike FHMA, the noise scaling effect does not exist for ALOHA since the entire frequency band is used for each transmission.
The SINR of the typical receiver in time slot $k$ is
\begin{equation}
\mathrm{SINR}_k=\frac{h_{k,x_0}r_0^{-\alpha}}{WN_0+\sum_{x\in\Phi\backslash\{x_0\}}h_{k,x}|x|^{-\alpha}\mathbf{1}(x\in\Phi_k)}.
\end{equation}


\section{Mean and Variance of the Local Delay}
\label{sec:mean_var}
In this section, we derive the mean and variance of the local delay for FHMA and for ALOHA respectively.
\subsection{FHMA}
\subsubsection{Mean local delay}
The following theorem gives the mean local delay in FHMA networks.
\begin{thm}
\label{thm:ld fh}
In FHMA with $N$ sub-bands, the mean local delay is
\begin{eqnarray}
D(N)&=&N\exp\left(\frac{A}{(N-1)^{1-\delta}N^{\delta}}+\frac{B}{N}\right),\label{equ:localdelay}
\end{eqnarray}
where $A=\lambda c_d r_0^d\theta^{\delta}C(\delta)$, $B=\theta r_0^\alpha WN_0$, $\delta=d/\alpha$, $C(\delta)=1/{\mathrm{sinc}(\delta)}$, and $c_d=|b(o,1)|$ is the volume of the $d$-dimensional unit ball\footnote{Since the equation (\ref{equ:localdelay}) has implied that $D(1)=\infty$, without loss of generality, we can regard the domain of $D(N)$ as $N\geqslant1$ with $D(1)=\infty$.}.
\end{thm}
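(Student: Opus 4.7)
The plan is to condition on the PPP $\Phi$ and exploit the fact that, because the fading coefficients and sub-band choices are independent across time slots, the per-slot success events are conditionally i.i.d.\ given $\Phi$ with probability $p_s(\Phi):=\mathbb{P}(\mathrm{SINR}_k\geq\theta\mid\Phi)$. Since a packet split over $N$ sub-bands needs $N$ successful slots to finish in FHMA, the conditional local delay is negative-binomial with mean $N/p_s(\Phi)$, and the problem reduces to computing $D(N)=N\,\mathbb{E}_\Phi[1/p_s(\Phi)]$---it is the expectation of the reciprocal (not the reciprocal of the expectation) that matters here.

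By Rayleigh fading, $p_s(\Phi)=e^{-B/N}\mathcal{L}_{I_k\mid\Phi}(\theta r_0^\alpha)$ with $B=\theta r_0^\alpha W N_0$. For each $x\in\Phi\setminus\{x_0\}$, its sub-band coincides with the tagged transmitter's with probability $1/N$, so averaging over the sub-band indicator and the exponential fading produces a per-point Laplace factor $1-\frac{1}{N}\cdot\frac{\theta r_0^\alpha|x|^{-\alpha}}{1+\theta r_0^\alpha|x|^{-\alpha}}$. Taking the reciprocal turns $1/p_s(\Phi)$ into a product of per-point factors $1/(1-\cdots)$, which interacts cleanly with the probability generating functional of the PPP; using $1/(1-u/N)-1=u/(N-u)$, this yields
\[
\mathbb{E}_\Phi[1/p_s(\Phi)]=e^{B/N}\exp\!\left(\lambda\!\int_{\mathbb{R}^d}\!\frac{\theta r_0^\alpha|x|^{-\alpha}}{N+(N-1)\theta r_0^\alpha|x|^{-\alpha}}\mathrm{d}x\right).
\]

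The remaining step---and the main computational obstacle---is the radial integral. Passing to polar coordinates ($\mathrm{d}x=dc_d r^{d-1}\mathrm{d}r$), then substituting $v=r/r_0$ and $w=v^\alpha$ with $\delta=d/\alpha$, the integral collapses to $\lambda\delta c_d r_0^d\theta\int_0^\infty w^{\delta-1}/(Nw+(N-1)\theta)\,\mathrm{d}w$. The inner integral is the classical formula $\int_0^\infty w^{\delta-1}/(a+bw)\,\mathrm{d}w=\pi/(\sin(\pi\delta)\,a^{1-\delta}b^\delta)$, valid for $\delta\in(0,1)$; combined with $\pi/\sin(\pi\delta)=C(\delta)/\delta$, this gives the exponent $A/((N-1)^{1-\delta}N^\delta)$ with $A=\lambda c_d r_0^d\theta^\delta C(\delta)$, matching (\ref{equ:localdelay}). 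Convergence requires $\alpha>d$, the standard regime in Poisson interference analysis, and the formula naturally delivers $D(1)=\infty$ because the integrand develops a non-integrable singularity at the origin when $N=1$.
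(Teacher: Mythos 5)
Your proposal is correct and follows essentially the same route as the paper: condition on $\Phi$, compute the per-slot conditional success probability by averaging the Rayleigh fading and the sub-band indicator (yielding the per-point factor $1-\frac{1}{N}\frac{\theta r_0^\alpha|x|^{-\alpha}}{1+\theta r_0^\alpha|x|^{-\alpha}}$), observe that the conditional delay is a sum of $N$ i.i.d.\ geometrics so that $D(N)=N\,\mathbb{E}_\Phi[1/p_s(\Phi)]$, and evaluate the expectation of the product of reciprocals via the PGFL and the standard integral $\int_0^\infty w^{\delta-1}/(a+bw)\,\mathrm{d}w=\pi a^{\delta-1}b^{-\delta}/\sin(\pi\delta)$. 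The algebra, the convergence condition $\delta<1$, and the identification of the divergence at $N=1$ all match the paper's derivation.
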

\begin{proof}
Let $\mathcal{C}_\Phi$ be the event that a transmission succeeds conditioned on the PPP $\Phi$. The probability for successful transmission given $\Phi$ is the same for each time slot. Our analysis below is conditioned on $\Phi$ having a point at $x_0$. This means that the probability measure of the point process is the Palm probability $\mathbb{P}^{x_0}$ (see Ch. 8 in \cite{haenggi2012stochastic}).
Correspondingly, the expectation, denoted by $\mathbb{E}^{x_0}$, is taken with respect to the measure $\mathbb{P}^{x_0}$.
With this notation, by setting the SINR threshold to be $\theta$, we denote the probability of successful transmission conditioned on $\Phi$ as $\mathbb{P}^{x_0}(\mathcal{C}_\Phi)=\mathbb{P}^{x_0}(\mathrm{SINR}_k>\theta\mid\Phi)$, which can be evaluated as
\begin{eqnarray}
\!\!\!&\!\!\!\!\!\!&\mathbb{P}^{x_0}(\mathcal{C}_\Phi)=\mathbb{P}^{x_0}(\mathrm{SINR}_k>\theta\mid\Phi) \nonumber\\
\!\!\!&\!\!\!=\!\!\!&\mathbb{P}^{x_0}\Big(h_{k,x_0}r_0^{-\alpha}>\theta \Big(\frac{W}{N}N_0+I_k\Big)\mid\Phi\Big) \nonumber\\
\!\!\!&\!\!\!\stackrel{(a)}{=}\!\!\!&\mathbb{E}^{x_0}\Big(\exp\Big(-\theta r_0^\alpha \Big(\frac{W}{N}N_0+I_k\Big)\Big)\mid\Phi\Big) \nonumber\\
\!\!\!&\!\!\!=\!\!\!&\mathbb{E}^{x_0}\Big(\exp\Big(-\theta r_0^\alpha\frac{W}{N}N_0- \nonumber\\
\!\!\!&\!\!\!\!\!\!& \sum_{x\in\Phi\backslash\{x_0\}}\theta r_0^\alpha h_{k,x}|x|^{-\alpha}\mathbf{1}(\mathcal{S}_k(x)=\mathcal{S}_k(x_0))\Big)\mid\Phi\Big) \nonumber\\
\!\!\!&\!\!\!=\!\!\!&\exp\Big(-\frac{\theta r_0^\alpha WN_0}{N}\Big) \nonumber\\
\!\!\!&\!\!\!\!\!\!&\!\!\!\!\!\!\!\prod_{x\in\Phi\backslash\{x_0\}}\!\!\!\mathbb{E}^{x_0}\left(\exp\left(-\theta r_0^\alpha h_{k,x}|x|^{-\alpha}\mathbf{1}(\mathcal{S}_k(x)=\mathcal{S}_k(x_0))\right)\mid\Phi\right) \nonumber\\
\!\!\!&\!\!\!=\!\!\!&\exp\Big(-\frac{\theta r_0^\alpha WN_0}{N}\Big) \nonumber\\
\!\!\!&\!\!\!\!\!\!&\!\!\!\!\prod_{x\in\Phi\backslash\{x_0\}}\!\!\!\!\Big(\frac{1}{N}\mathbb{E}^{x_0}\left(\exp\left(-\theta r_0^\alpha h_{k,x}|x|^{-\alpha}\right)\mid\Phi\right)+\frac{N-1}{N}\Big) \nonumber\\
\!\!\!&\!\!\!\stackrel{(b)}{=}\!\!\!&\exp\Big(-\frac{\theta r_0^\alpha WN_0}{N}\Big)\!\!\!\!\!\prod_{x\in\Phi\backslash\{x_0\}}\!\!\!\!\!\Big(\frac{1}{N}\frac{1}{1+\theta r_0^\alpha|x|^{-\alpha}}+\frac{N-1}{N}\Big). \nonumber\\
\!\!\!&& \label{equ:succ}
\end{eqnarray}

In steps $(a)$ and $(b)$ of the derivation above, we have applied the property that the fading coefficients $h_{k,x}$ are i.i.d. random variables with exponential distribution of unit mean.
The number of time slots needed until a successful time slot appears, denoted by $\Delta$, is a random variable called \emph{delay till success} (DTS) \cite{net:Zhang12twc}.
Conditioned upon $\Phi$, the success events in different time slots are independent with probability $\mathbb{P}^{x_0}(\mathcal{C}_\Phi)$; therefore, the DTS with given $\Phi$, denoted by $\Delta_\Phi$, is a random variable with geometric distribution given by
\begin{equation}
\mathbb{P}^{x_0}\left(\Delta_\Phi=k\right)=\left(1-\mathbb{P}^{x_0}(\mathcal{C}_\Phi)\right)^{k-1}\mathbb{P}^{x_0}(\mathcal{C}_\Phi). \label{equ:pdfdelta}
\end{equation}
The conditional expectation of $\Delta_\Phi$ is taken w.r.t. the fading and the MAC, given by $\mathbb{E}^{x_0}\left(\Delta_\Phi\right)=1/\mathbb{P}^{x_0}(\mathcal{C}_\Phi)$.
Noticing that a packet will need $N$ successful time slots to finish transmission in FHMA, the mean local delay can be evaluated as
\begin{eqnarray}
D(N)&\!\!\!=\!\!\!&N\mathbb{E}^{x_0}\left(\Delta\right) \nonumber \\
&\!\!\!=\!\!\!&N\mathbb{E}^{x_0}_\Phi\left(\mathbb{E}^{x_0}\left(\Delta_\Phi\right)\right) \nonumber \\
%
&\!\!\!=\!\!\!&N\mathbb{E}^{x_0}_\Phi\Big(\frac{1}{\mathbb{P}^{x_0}(\mathcal{C}_\Phi)}\Big) \nonumber \\
&\!\!\!\stackrel{(a)}{=}\!\!\!&N\exp\Big(\frac{\theta r_0^\alpha WN_0}{N}\Big) \nonumber \\
&\!\!\!\!\!\!&\mathbb{E}^{x_0}_\Phi\bigg(\frac{1}{\prod_{x\in\Phi\backslash\{x_0\}}\left(\frac{1}{N}\frac{1}{1+\theta r_0^\alpha|x|^{-\alpha}}+\frac{N-1}{N}\right)}\bigg) \nonumber \\
&\!\!\!=\!\!\!&N\exp\left(\frac{\theta r_0^\alpha WN_0}{N}\right) \nonumber \\
&\!\!\!\!\!\!&\mathbb{E}^{x_0}_\Phi\bigg(\prod_{x\in\Phi\backslash\{x_0\}}\frac{1}{\frac{1}{N}\frac{1}{1+\theta r_0^\alpha|x|^{-\alpha}}+\frac{N-1}{N}}\bigg).
\end{eqnarray}
where $(a)$ follows from (\ref{equ:succ}).
By applying the probability generating functional (PGFL) of the PPP, we obtain
\begin{eqnarray}
\!\!\!\!&\!\!\!\!\!\!&D(N) \nonumber \\
\!\!\!\!&\!\!\!=\!\!\!&N\exp\left(\frac{\theta r_0^\alpha WN_0}{N}\right) \nonumber \\
\!\!\!\!&\!\!\!\!\!\!&\exp\bigg(-\lambda\int_{\mathbb{R}^d}\bigg(1-\frac{1}{\frac{1}{N}\frac{1}{1+\theta r_0^\alpha|x|^{-\alpha}}+\frac{N-1}{N}}\bigg)\mathrm{d}x\bigg) \nonumber \\
\!\!\!\!&\!\!\!=\!\!\!&N\exp\bigg(\frac{\theta r_0^\alpha WN_0}{N}+\lambda c_dd\int_0^\infty \frac{r^{d-1}}{\frac{N}{\theta r_0^\alpha}r^{\alpha}+N-1}\mathrm{d}r\bigg) \nonumber \\
\!\!\!\!&\!\!\!=\!\!\!&N\exp\bigg(\frac{\lambda c_d r_0^d\theta^{\delta}C(\delta)}{(N-1)^{1-\delta}N^{\delta}}+\frac{\theta r_0^\alpha WN_0}{N}\bigg).
\end{eqnarray}
where $\delta=d/\alpha$, $C(\delta)$ is given by $C(\delta)=\Gamma\left(1+\delta\right)\Gamma\left(1-\delta\right)=\frac{1}{\mathrm{sinc}(\delta)}$, and $c_d=|b(o,1)|$ is the volume of the $d$-dimensional unit ball.
\end{proof}

The result in Theorem \ref{thm:ld fh} is closed-form and easy to evaluate and interpret.
The value of $A$ is determined by the interference and that of $B$ is due to the thermal noise.
From (\ref{equ:localdelay}), we have
\begin{eqnarray}
\!\!\!\!&&\!\!\!\!\!\!\!\!D(N)=N\exp\bigg(A\bigg(1-\frac{1}{N}\bigg)^{\delta-1}\frac{1}{N}+\frac{B}{N}\bigg) \nonumber \\
\!\!\!\!&\!\!\!=\!\!\!&N\exp\bigg(A\bigg(1-\frac{\delta-1}{N}+O\bigg(\frac{1}{N^2}\bigg)\bigg)\frac{1}{N}+\frac{B}{N}\bigg) \nonumber \\
\!\!\!\!&\!\!\!=\!\!\!&N\exp\bigg(\frac{A+B}{N}+O\bigg(\frac{1}{N^2}\bigg)\bigg) \nonumber \\
\!\!\!\!&\!\!\!=\!\!\!&N+A+B+O\bigg(\frac{1}{N}\bigg).
\end{eqnarray}
The result shows that when $N$ is large, the mean local delay increases linearly with $N$.
Since $D(1)$ is infinity, there exists an optimal number of sub-bands $N_{\mathrm{opt}}$ that minimizes the mean local delay.
Inspecting $D(N)$, we see that there are two effects by splitting the entire frequency band into $N$ sub-bands:
first, the mean local delay $D(N)$ tends to decrease due to the reduced interference correlation;
second, $D(N)$ tends to increase since the number of time slots needed becomes $N$ times larger.
In view of this, we introduce two regimes, \emph{correlation-limited} regime and \emph{bandwidth-limited} regime.
For $N<N_{\mathrm{opt}}$, the first effect outweighs the second one, and the network operates in the correlation-limited regime.
For $N>N_{\mathrm{opt}}$, it is the opposite and the network operates in the bandwidth-limited regime.

In the above, we have derived results under the assumption that the frequency allocation is dynamic (i.e., the sub-bands are allocated randomly and independently in each time slot).
Alternatively, one could consider the case where the frequency allocation is static over time.
That case is exactly the same as the case where no frequency splitting is applied, with the only difference that the intensity of the interfering transmitters is scaled down to $\lambda/N$. The mean local delay in that case is also infinite.
This fact explains that even though frequency splitting is introduced, if the sub-bands are not reallocated randomly temporally,
the mean local delay will still be infinite. This is a nontrivial observation since it reveals that the reduction of the mean local delay by introducing FHMA does not come from reducing the interference or the thermal noise, but mainly comes from reducing the interference correlation.

Based on Theorem \ref{thm:ld fh}, we show how the normalized mean local delay $\frac{D(N)}{\log_2(1+\theta)}$ varies with $N$ numerically. As for the parameters, we ignore the thermal noise ($N_0=0$) and set the intensity of transmitters as $\lambda=0.01\mathrm{m}^{-2}$ by default, which means that the coverage area of each transmitter is $100\mathrm{m}^2$ on average, reasonable for a typical deployment of WLAN. The path loss exponent is set as $\alpha=4$ by default, and the distance between the receiver and the typical desired transmitter is $r_0=5\mathrm{m}$. Let $\theta$ be the outage threshold for SINR. The relationship between $\frac{D(N)}{\log_2(1+\theta)}$ and $N$ is depicted in Fig. \ref{fig:change}.

By changing the values of $\alpha$ and $\lambda$ respectively, we get the curves in Fig. \ref{fig:change}.
Comparing the curves in Fig. \ref{fig:default} with those in Fig. \ref{fig:change:a} and Fig. \ref{fig:change:b}, we observe that the optimal number of sub-bands increases when $\alpha$ decreases or when $\lambda$ increases. This observation is consistent with the intuition: Smaller $\alpha$ implies that the signal strength decays more slowly with distance, and larger $\lambda$ implies that more transmitters exist in the same region, so in both cases more interference is created. Therefore, the entire frequency band should be divided into more sub-bands, namely larger $N_{\mathrm{opt}}$, to reduce the interference and interference correlation.

\begin{figure}
  \centering
  \subfigure[$\lambda=0.01$ and $\alpha=4$.]{
    \label{fig:default} 
    \includegraphics[width=0.45\textwidth]{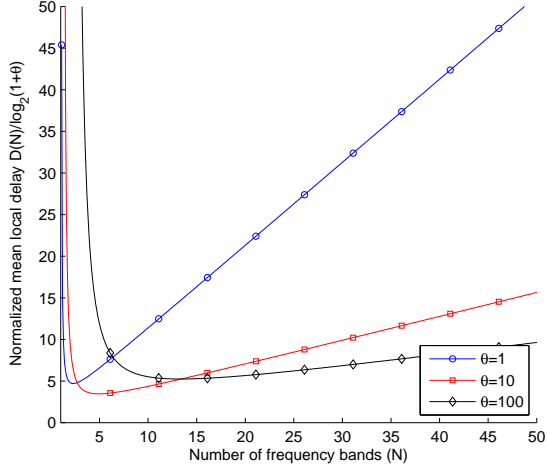}}
  \subfigure[$\lambda=0.01$ and $\alpha=3$.]{
    \label{fig:change:a} 
    \includegraphics[width=0.45\textwidth]{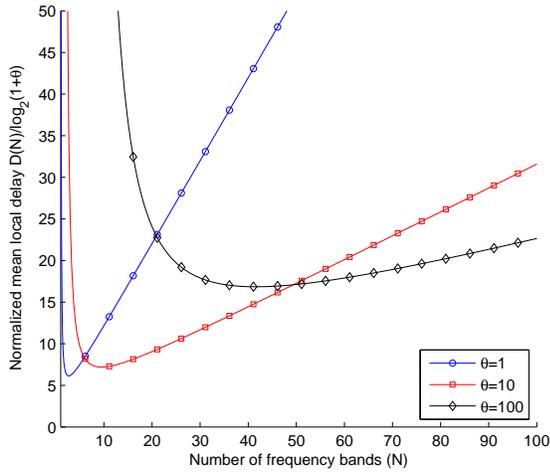}}
  \subfigure[$\lambda=0.04$ and $\alpha=4$.]{
    \label{fig:change:b} 
    \includegraphics[width=0.45\textwidth]{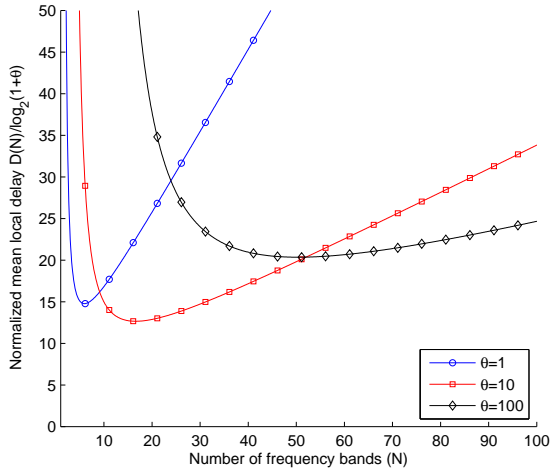}}
  \caption{The normalized mean local delay $\frac{D(N)}{\log_2(1+\theta)}$ as a function of the number of sub-bands $N$, when $d=2$, $r=5$m, and thermal noise ignored.}
  \label{fig:change} 
\end{figure}

\subsubsection{Variance of the local delay}
The mean local delay discussed above has characterized the mean number of time slots needed until a packet is successfully transmitted.
In order to better understand the distribution of the local delay, we also derive its variance.
The following theorem gives the variance of the local delay for FHMA.
\begin{thm}
\label{thm:var fp}
In FHMA with $N$ sub-bands, the variance of the local delay is
\begin{equation}
V(N)=N\left(N+1\right)\exp\bigg(\frac{(2N-1-\delta)A}{N^{\delta}(N-1)^{2-\delta}}+\frac{2B}{N}\bigg)\nonumber\\
\end{equation}
\begin{equation}
\qquad\qquad-D(N)-D^2(N). \label{equ:var}
\end{equation}
\end{thm}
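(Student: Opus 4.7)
The plan is to condition on the point process $\Phi$ and apply the law of total variance. Given $\Phi$, let $p_\Phi:=\mathbb{P}^{x_0}(\mathcal{C}_\Phi)$; since the fading and sub-band draws are independent across slots, the per-slot success indicators are conditionally i.i.d.\ Bernoulli$(p_\Phi)$. In FHMA a packet needs $N$ successful slots, so the local delay $L$ is, conditional on $\Phi$, a negative binomial random variable with $\mathbb{E}[L\mid\Phi]=N/p_\Phi$ and $\mathrm{Var}(L\mid\Phi)=N(1-p_\Phi)/p_\Phi^{2}$. Expanding $V(N)=\mathbb{E}_\Phi^{x_0}[\mathrm{Var}(L\mid\Phi)]+\mathrm{Var}_\Phi^{x_0}(\mathbb{E}[L\mid\Phi])$ and using $D(N)=N\mathbb{E}_\Phi^{x_0}[1/p_\Phi]$ from Theorem \ref{thm:ld fh} yields
\begin{equation*}
V(N)=N(N+1)\,\mathbb{E}_\Phi^{x_0}\!\left[1/p_\Phi^{2}\right]-D(N)-D(N)^{2},
\end{equation*}
so the entire task reduces to evaluating $\mathbb{E}_\Phi^{x_0}[1/p_\Phi^{2}]$.

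From (\ref{equ:succ}), $1/p_\Phi^{2}=\exp(2B/N)\prod_{x\in\Phi\setminus\{x_0\}}f(|x|)^{-2}$ with $f(r):=\frac{1}{N}\cdot\frac{1}{1+\theta r_0^\alpha r^{-\alpha}}+\frac{N-1}{N}$. By Slivnyak's theorem and the PGFL of the PPP,
\begin{equation*}
\mathbb{E}_\Phi^{x_0}\!\left[1/p_\Phi^{2}\right]=\exp\!\bigg(\frac{2B}{N}+\lambda c_d d\!\int_{0}^{\infty}\!\bigl[f(r)^{-2}-1\bigr]r^{d-1}\,dr\bigg).
\end{equation*}

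The main obstacle is this radial integral, where one must track the powers of $N$ and $N-1$ carefully. Setting $u=\theta r_0^\alpha r^{-\alpha}$ and exploiting $N(1+u)-[N+(N-1)u]=u$ gives the compact form $f^{-2}-1=u[2N+(2N-1)u]/[N+(N-1)u]^{2}$. The substitution $t=r^{\alpha}/(\theta r_0^\alpha)$ turns the integral into $\tfrac{r_0^d\theta^\delta}{\alpha}\int_{0}^{\infty}(2Nt+2N-1)t^{\delta-1}/(Nt+N-1)^{2}\,dt$; a further rescaling $w=Nt/(N-1)$ extracts a prefactor $(N-1)^{\delta-2}/N^\delta$ and leaves $[2(N-1)w+(2N-1)]w^{\delta-1}/(w+1)^{2}$.

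The decisive step is the decomposition $2(N-1)w+(2N-1)=2(N-1)(w+1)+1$, which splits the remaining integral into two standard beta integrals, evaluated (using $C(\delta)=\delta\Gamma(\delta)\Gamma(1-\delta)$) as $\int_0^\infty w^{\delta-1}/(w+1)\,dw=C(\delta)/\delta$ and $\int_0^\infty w^{\delta-1}/(w+1)^{2}\,dw=(1-\delta)C(\delta)/\delta$, with weighted sum $(2N-1-\delta)C(\delta)/\delta$. Multiplying by the prefactor $(N-1)^{\delta-2}/N^\delta$ and by $\lambda c_d d/(\alpha\delta)=\lambda c_d$ (since $d=\alpha\delta$) reproduces exactly $(2N-1-\delta)A/[N^\delta(N-1)^{2-\delta}]$. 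Substituting $\mathbb{E}_\Phi^{x_0}[1/p_\Phi^{2}]=\exp\bigl((2N-1-\delta)A/[N^\delta(N-1)^{2-\delta}]+2B/N\bigr)$ into the total-variance identity above then yields (\ref{equ:var}).
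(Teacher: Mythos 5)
Your proof is correct and follows essentially the same route as the paper: both condition on $\Phi$, exploit the conditional i.i.d.\ geometric structure of the per-transmission delays (your negative-binomial/law-of-total-variance packaging is just a reorganization of the paper's direct expansion of $\mathbb{E}^{x_0}[(\sum_i\Delta_i)^2]$), and reduce the problem to $N(N+1)\,\mathbb{E}^{x_0}_\Phi[1/p_\Phi^2]-D(N)-D^2(N)$ with $\mathbb{E}^{x_0}_\Phi[1/p_\Phi^2]$ evaluated via the PGFL. Your explicit beta-integral evaluation correctly supplies the step the paper leaves implicit, and all constants check out.
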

\begin{proof}
In order to transmit a packet in FHMA, $N$ successful transmissions are needed.
Letting $\Delta_i$ $(1\leq i\leq N)$ be the DTS of the $i$th transmission, we get the local delay of a packet as $\sum_{i=1}^{N}\Delta_i$.
For $1\leq i,j\leq N$ and $i\neq j$, $\Delta_i$ and $\Delta_j$ are dependent because the interference of the $i$th transmission and that of the $j$th transmission are correlated.
However, if we condition on $\Phi$, $\{\Delta_i\}$ are i.i.d. random variables with geometric distribution given by (\ref{equ:pdfdelta}).
With these notations, we obtain the variance of the local delay as
\begin{eqnarray}
\!\!\!&\!\!\!\!\!\!&\!\!\!\!\!\!V(N)=\mathbb{E}^{x_0}\left(\left(\sum_{i=1}^{N}\Delta_i\right)^2\right)-\left(\mathbb{E}^{x_0}\left(\sum_{i=1}^{N}\Delta_i\right)\right)^2 \nonumber \\
\!\!\!&\!\!\!=\!\!\!&\mathbb{E}^{x_0}\left(\sum_{i=1}^{N}\Delta_i^2+\sum_{\stackrel{i,j=1}{i\neq j}}^{N}2\Delta_i\Delta_j\right)-\left(\sum_{i=1}^{N}\mathbb{E}^{x_0}\left(\Delta_i\right)\right)^2 \nonumber \\
\!\!\!&\!\!\!\stackrel{(a)}{=}\!\!\!&\sum_{i=1}^{N}\mathbb{E}^{x_0}\left(\Delta_i^2\right)+\sum_{\stackrel{i,j=1}{i\neq j}}^{N}2\mathbb{E}^{x_0}\left(\Delta_i\Delta_j\right)-D^2(N), \nonumber
\end{eqnarray}
where $(a)$ follows from the definition of the mean local delay.
By applying the total expectation formula, we have
\begin{eqnarray}
\!\!\!V(N)\!\!\!&\!\!\!=\!\!\!&\!\!\!\sum_{i=1}^{N}\mathbb{E}^{x_0}_\Phi\left(\mathbb{E}^{x_0}\left(\Delta_i^2\mid\Phi\right)\right) \nonumber \\
&&+\sum_{\stackrel{i,j=1}{i\neq j}}^{N}2\mathbb{E}^{x_0}_\Phi\left(\mathbb{E}^{x_0}\left(\Delta_i\Delta_j\mid\Phi\right)\right)-D^2(N) \nonumber \\
\!\!\!&\stackrel{(b)}{=}&\!\!\!N\mathbb{E}^{x_0}_\Phi \left(\frac{2-\mathbb{P}^{x_0}(\mathcal{C}_\Phi)}{(\mathbb{P}^{x_0}(\mathcal{C}_\Phi))^2}\right)  \nonumber\\
\!\!\!&&+N\left(N-1\right)\mathbb{E}^{x_0}_\Phi \left(\frac{1}{(\mathbb{P}^{x_0}(\mathcal{C}_\Phi))^2}\right)-D^2(N) \nonumber\\
\!\!\!&\!\!\!=\!\!\!&\!\!\!N\left(N+1\right)\mathbb{E}^{x_0}_\Phi\left(\frac{1}{(\mathbb{P}^{x_0}(\mathcal{C}_\Phi))^2}\right) \nonumber\\
\!\!\!&&-N\mathbb{E}^{x_0}_\Phi\left(\frac{1}{\mathbb{P}^{x_0}(\mathcal{C}_\Phi)}\right)-D^2(N)\nonumber\\
\!\!\!&\!\!\!=\!\!\!&\!\!\!N\left(N+1\right)\mathbb{E}^{x_0}_\Phi\left(\frac{1}{(\mathbb{P}^{x_0}(\mathcal{C}_\Phi))^2}\right)\nonumber \\
\!\!\!&&-D(N)-D^2(N),  \label{equ:vartmp}
\end{eqnarray}
where $(b)$ follows from the second moment of the geometrically distributed random variable.

From (\ref{equ:succ}), we have
\begin{eqnarray}
\!\!\!\!\!\!\!\!\!&\!\!\!\!\!\!&\mathbb{E}^{x_0}_\Phi\left(\frac{1}{(\mathbb{P}^{x_0}(\mathcal{C}_\Phi))^2}\right) \nonumber \\
\!\!\!\!\!\!\!\!\!&\!\!\!=\!\!\!&\exp\left(\frac{2\theta r_0^\alpha WN_0}{N}\right)\nonumber\\
\!\!\!\!\!\!&&\quad\mathbb{E}^{x_0}_\Phi\Bigg(\frac{1}{\prod_{x\in\Phi\backslash\{x_0\}}\left(\frac{1}{N}\frac{1}{1+\theta r_0^\alpha|x|^{-\alpha}}+\frac{N-1}{N}\right)^2}\Bigg) \nonumber \\
\!\!\!\!\!\!\!\!\!&\!\!\!\stackrel{(c)}{=}\!\!\!&\exp\Bigg(\frac{2\theta r_0^\alpha WN_0}{N}\nonumber\\
\!\!\!\!\!\!&&\quad-\lambda\int_{\mathbb{R}^d}\Bigg(1-\frac{1}{\left(\frac{1}{N}\frac{1}{1+\theta r_0^\alpha|x|^{-\alpha}}+\frac{N-1}{N}\right)^2}\Bigg)\mathrm{d}x\Bigg) \nonumber \\
\!\!\!\!\!\!\!\!\!&\!\!\!=\!\!\!&\exp\Big(\frac{2\theta r_0^\alpha WN_0}{N}\nonumber\\
\!\!\!\!\!\!&&\!\!\!-\lambda c_dd\int_0^\infty\left(1-\frac{N^2(1+\theta r_0^\alpha r^{-\alpha})^2}{\left(N+(N-1)\theta r_0^\alpha r^{-\alpha}\right)^2}\right)r^{d-1}\mathrm{d}r\Big) \nonumber \\
\!\!\!\!\!\!\!\!\!&\!\!\!=\!\!\!&\exp\left(\frac{2\theta r_0^\alpha WN_0}{N}+\frac{\lambda c_d r_0^d\theta^\delta C(\delta)(2N-1-\delta)}{N^{\delta}(N-1)^{2-\delta}}\right),\nonumber\\
\!\!\!\!\!\!&& \label{equ:quadprob}
\end{eqnarray}
where $(c)$ follows by applying the PGFL of the PPP. Plugging (\ref{equ:quadprob}) into (\ref{equ:vartmp}), we get the variance of the local delay as in Theorem \ref{thm:var fp}.
\end{proof}

\subsection{ALOHA}
The fundamental difference between FHMA and ALOHA is that if a packet is to be transmitted during a time slot, in FHMA the packet will be surely transmitted by randomly choosing a sub-band, while in ALOHA the packet will only be transmitted with a given probability.
Similar to the analysis of FHMA, we also assume that a packet needs exactly one time slot if it is allocated the entire frequency band $W$ under SINR threshold $\theta$ and successfully transmitted in that time slot.
We assume that each node transmits with probability $p$ in each time slot and if it transmits, it will make use of the entire frequency band.
In that way, only one successful time slot is needed to transmit a packet, and the local delay is the DTS of one transmission, denoted by $\Delta$.
\subsubsection{Mean local delay}
The following theorem gives the mean local delay for ALOHA.
\begin{thm}
\label{thm:ld aloha}
In ALOHA with transmit probability $p$, the mean local delay is
\begin{eqnarray}
\widetilde{D}(p)&=&\frac{1}{p}\exp\bigg(\frac{pA}{(1-p)^{1-\delta}}+B\bigg).\label{equ:localdelay_aloha}
\end{eqnarray}
\end{thm}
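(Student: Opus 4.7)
The plan is to follow the same template as the proof of Theorem~\ref{thm:ld fh}, adapting each step to accommodate the ALOHA thinning of both the interferers and the typical transmitter. I would first write the per-slot success event as a conjunction: in any given slot, a successful transmission requires (i) that the typical transmitter $x_0$ actually transmits, which happens independently with probability $p$, and (ii) that, conditioned on transmission, the SINR exceeds $\theta$. Conditioning on $\Phi$, the two events factor, yielding
\[
\mathbb{P}^{x_0}(\mathcal{C}_\Phi)=p\,\mathbb{P}^{x_0}(\mathrm{SINR}_k>\theta\mid\Phi,\text{$x_0$ transmits}).
\]
The conditional SINR term is handled exactly as in (\ref{equ:succ}): Rayleigh fading turns the indicator $\mathbf{1}(\mathrm{SINR}_k>\theta)$ into a Laplace transform in $h_{k,x_0}$, producing the noise factor $\exp(-\theta r_0^\alpha WN_0)$ (no $1/N$ scaling here, since ALOHA uses the full band), and, for each interferer $x\in\Phi\setminus\{x_0\}$, a factor $p/(1+\theta r_0^\alpha|x|^{-\alpha})+(1-p)$ arising from averaging over the independent ALOHA indicator $\mathbf{1}(x\in\Phi_k)$ and the exponential fading $h_{k,x}$.

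Next I would use the fact that, conditioned on $\Phi$, the per-slot success events are independent with common probability $\mathbb{P}^{x_0}(\mathcal{C}_\Phi)$, so the DTS $\Delta_\Phi$ is geometric with mean $1/\mathbb{P}^{x_0}(\mathcal{C}_\Phi)$. Since a packet in ALOHA needs only one successful slot, the mean local delay is
\[
\widetilde{D}(p)=\mathbb{E}^{x_0}_\Phi\!\left(\frac{1}{\mathbb{P}^{x_0}(\mathcal{C}_\Phi)}\right)
=\frac{e^{B}}{p}\,\mathbb{E}^{x_0}_\Phi\!\left(\prod_{x\in\Phi\setminus\{x_0\}}\frac{1}{\tfrac{p}{1+\theta r_0^\alpha|x|^{-\alpha}}+(1-p)}\right),
\]
with $B=\theta r_0^\alpha WN_0$. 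Applying the PGFL of the PPP (using Slivnyak's theorem to drop the conditioning on $x_0\in\Phi$) converts the product into $\exp(-\lambda\int_{\mathbb{R}^d}(1-g(x))\,\mathrm{d}x)$ with $g(x)$ the reciprocal inside the product.

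The main computational step is the resulting radial integral. Setting $u=\theta r_0^\alpha|x|^{-\alpha}$, algebraic simplification gives $1-g(x)=-pu/(1+(1-p)u)$, so that
\[
-\lambda\int_{\mathbb{R}^d}\!(1-g(x))\,\mathrm{d}x=\lambda c_d d\int_0^\infty\!\frac{p\,\theta r_0^\alpha r^{-\alpha}}{1+(1-p)\theta r_0^\alpha r^{-\alpha}}\,r^{d-1}\,\mathrm{d}r.
\]
Substituting $s=(1-p)\theta r_0^\alpha r^{-\alpha}$ reduces this to $\frac{p((1-p)\theta r_0^\alpha)^{\delta}}{\alpha(1-p)}\int_0^\infty \frac{s^{-\delta}}{1+s}\,\mathrm{d}s$, and the standard identity $\int_0^\infty s^{-\delta}/(1+s)\,\mathrm{d}s=\pi/\sin(\pi\delta)=C(\delta)/\delta$, together with $d/(\alpha\delta)=1$, collapses everything to $pA/(1-p)^{1-\delta}$, where $A=\lambda c_d r_0^d\theta^\delta C(\delta)$. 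Combined with the prefactor $e^{B}/p$, this yields exactly (\ref{equ:localdelay_aloha}).

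I do not anticipate any serious obstacle; the derivation is essentially parallel to Theorem~\ref{thm:ld fh}, with the ALOHA probability $p$ playing the role of $1/N$ in the interferer thinning but additionally introducing the leading $1/p$ from the typical transmitter's own access decision, and with $N_0$ not being scaled. The only bookkeeping care needed is to track that the $1/p$ factor sits outside the exponential while the interferer contribution is absorbed into it, and to carry out the substitution so that the answer is expressed through the same constants $A$, $B$, $\delta$ used in the FHMA theorem.
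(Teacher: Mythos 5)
Your proposal is correct and follows essentially the same route as the paper's proof: factoring the per-slot success probability as $p$ times the conditional SINR success probability, averaging the Rayleigh fading and the ALOHA indicator to get the per-interferer factor $\frac{p}{1+\theta r_0^\alpha|x|^{-\alpha}}+1-p$, taking the conditional geometric DTS, and applying the PGFL; your explicit evaluation of the radial integral (which the paper skips) checks out and yields $pA/(1-p)^{1-\delta}$ as claimed.
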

\begin{proof}
In each time slot, a packet will be transmitted with probability $p$ and the transmission will be successful with probability $\mathbb{P}^{x_0}(\mathrm{SINR}_k>\theta\mid\Phi)$ conditioned upon $\Phi$. Therefore, similar to the derivation of (\ref{equ:succ}), the probability for successfully transmitting a packet conditioned upon $\Phi$ in a time slot is
\begin{eqnarray}
\!\!\!\!\!&\!\!\!\!&\!\!\!\!\!\mathbb{P}^{x_0}(\mathcal{C}_\Phi)\stackrel{(a)}{=}p\mathbb{P}^{x_0}(\mathrm{SINR}_k>\theta\mid\Phi) \nonumber\\
\!\!\!&\!\!\!\!=\!\!\!\!&p\mathbb{P}^{x_0}\left(h_{k,x_0}r_0^{-\alpha}>\theta \left({W}N_0+I_k\right)\mid\Phi\right) \nonumber\\
\!\!\!&\!\!\!\!\stackrel{(b)}{=}\!\!\!\!&p\mathbb{E}^{x_0}\left(\exp\left(-\theta r_0^\alpha \left({W}N_0+I_k\right)\right)\mid\Phi\right) \nonumber\\
\!\!\!&\!\!\!\!=\!\!\!\!&p\mathbb{E}^{x_0}\Big(\exp\Big(-\theta r_0^\alpha{W}N_0 \nonumber\\
\!\!\!&\!\!\!\!&-\!\!\sum_{x\in\Phi\backslash\{x_0\}}\!\!\theta r_0^\alpha h_{k,x}|x|^{-\alpha}\mathbf{1}(x\in\Phi_k)\Big)\mid\Phi\Big) \nonumber\\
\!\!\!&\!\!\!\!=\!\!\!\!&p\exp\left(-{\theta r_0^\alpha WN_0}\right)\nonumber\\
\!\!\!&\!\!\!\!&\!\!\!\!\!\!\prod_{x\in\Phi\backslash\{x_0\}}\!\!\!\!\mathbb{E}^{x_0}\left(\exp\left(-\theta r_0^\alpha h_{k,x}|x|^{-\alpha}\mathbf{1}(x\in\Phi_k)\right)\mid\Phi\right) \nonumber\\
\!\!\!&\!\!\!\!=\!\!\!\!&p\exp\left(-{\theta r_0^\alpha WN_0}\right)\nonumber\\
\!\!\!&\!\!\!\!&\!\!\!\!\!\!\prod_{x\in\Phi\backslash\{x_0\}}\!\!\!\!\Big(p\mathbb{E}^{x_0}\Big(\exp\big(-\theta r_0^\alpha h_{k,x}|x|^{-\alpha}\big)\mid\Phi\Big)+1-p\Big) \nonumber\\
\!\!\!&\!\!\!\!\stackrel{(c)}{=}\!\!\!\!&p\exp\big(-\theta r_0^\alpha WN_0\big)\!\!\!\!\!\!\prod_{x\in\Phi\backslash\{x_0\}}\!\!\!\!\!\!\Big(\frac{p}{1+\theta r_0^\alpha|x|^{-\alpha}}+1-p\Big). \nonumber\\
&\!\!\!\!& \label{equ:succ_aloha}
\end{eqnarray}
where $(a)$ is because a transmission occurs with probability $p$, and $(b)$ and $(c)$ follows because the fading coefficients $h_{k,x}$ are i.i.d. random variables with exponential distribution of unit mean.
Then, the mean local delay for ALOHA is given by
\begin{eqnarray}
\widetilde{D}(p)&\!\!\!\!=\!\!\!\!&\mathbb{E}^{x_0}_\Phi\Big(\frac{1}{\mathbb{P}^{x_0}(\mathcal{C}_\Phi)}\Big) \nonumber \\
&\!\!\!\!=\!\!\!\!&\frac{1}{p}\exp\Big(\theta r_0^\alpha WN_0\nonumber\\
&&-\lambda\int_{\mathbb{R}^d}\Big(1-\frac{1}{\frac{p}{1+\theta r_0^\alpha|x|^{-\alpha}}+1-p}\Big)\mathrm{d}x\Big) \nonumber \\
&\!\!\!\!=\!\!\!\!&\frac{1}{p}\exp\left(\frac{\lambda c_d r_0^d\theta^{\delta}C(\delta)p}{(1-p)^{1-\delta}}+\theta r_0^\alpha WN_0\right).\nonumber\\
&& \label{variance_aloha_tmp}
\end{eqnarray}
Applying the definition of $A$ and $B$ in Theorem \ref{thm:ld fh}, we obtain the result in Theorem \ref{thm:ld aloha}.
\end{proof}

\subsubsection{Variance of the local delay}
The variance of the local delay in ALOHA is given by the following theorem.
\begin{thm}
\label{thm:var aloha}
In ALOHA with transmit probability $p$, the variance of the local delay is
\begin{eqnarray}
\widetilde{V}(p)&=&\frac{2}{p^2}\exp\Big(\frac{(2-p-\delta p)pA}{(1-p)^{2-\delta}}+2B\Big) \nonumber\\
&&\qquad-\widetilde{D}(p)-\widetilde{D}^2(p).\label{equ:var_aloha}
\end{eqnarray}
\end{thm}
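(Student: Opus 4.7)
The plan is to follow the blueprint of Theorem \ref{thm:var fp}, noting that the ALOHA case is structurally simpler because only one successful time slot is needed to transmit a packet: the local delay is a single DTS $\Delta$, not a sum of $N$ correlated DTSs. Consequently $\widetilde{V}(p) = \mathbb{E}^{x_0}(\Delta^2) - \widetilde{D}^2(p)$, and there is no cross-term bookkeeping as in the FHMA derivation.

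First, I condition on $\Phi$. Given $\Phi$, the delay $\Delta_\Phi$ is geometric with success probability $\mathbb{P}^{x_0}(\mathcal{C}_\Phi)$ as in (\ref{equ:succ_aloha}), so its second moment is $(2 - \mathbb{P}^{x_0}(\mathcal{C}_\Phi))/(\mathbb{P}^{x_0}(\mathcal{C}_\Phi))^2$. Taking the outer expectation over $\Phi$ and recognizing $\mathbb{E}^{x_0}_\Phi(1/\mathbb{P}^{x_0}(\mathcal{C}_\Phi)) = \widetilde{D}(p)$ from Theorem \ref{thm:ld aloha}, I obtain
\begin{equation*}
\widetilde{V}(p) = 2\mathbb{E}^{x_0}_\Phi\!\left(\frac{1}{(\mathbb{P}^{x_0}(\mathcal{C}_\Phi))^2}\right) - \widetilde{D}(p) - \widetilde{D}^2(p),
\end{equation*}
so the entire problem reduces to evaluating the second negative moment of the conditional success probability.

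Second, I square the product form (\ref{equ:succ_aloha}) and apply the PGFL of the PPP, which turns $\mathbb{E}^{x_0}_\Phi(1/(\mathbb{P}^{x_0}(\mathcal{C}_\Phi))^2)$ into $p^{-2}\exp(2B)$ times an exponential of an $\mathbb{R}^d$-integral. Writing $\frac{p}{1+u}+1-p = \frac{1+(1-p)u}{1+u}$ with $u = \theta r_0^\alpha|x|^{-\alpha}$, the integrand collapses to $\frac{pu(2+(2-p)u)}{(1+(1-p)u)^2}$. Passing to polar coordinates and substituting $u = \theta r_0^\alpha r^{-\alpha}$ splits it into two standard integrals $\int_0^\infty u^{s-1}(1+(1-p)u)^{-2}\,\mathrm{d}u$ with $s=1-\delta$ and $s=2-\delta$, each evaluable via $\int_0^\infty u^{s-1}(1+au)^{-2}\,\mathrm{d}u = a^{-s}\Gamma(s)\Gamma(2-s)$.

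The main obstacle, exactly as in the FHMA variance proof, is the algebraic bookkeeping: the two Gamma-function contributions must telescope—using $\Gamma(1+\delta)\Gamma(1-\delta) = C(\delta)$ and $\Gamma(2-\delta)\Gamma(\delta) = (1-\delta)C(\delta)/\delta$—into exactly the coefficient $(2-p-\delta p)/(1-p)^{2-\delta}$. Once that simplification is verified, the integral contributes $(2-p-\delta p)pA/(1-p)^{2-\delta}$ to the exponent, so that
\begin{equation*}
\mathbb{E}^{x_0}_\Phi\!\left(\frac{1}{(\mathbb{P}^{x_0}(\mathcal{C}_\Phi))^2}\right) = \frac{1}{p^2}\exp\!\left(\frac{(2-p-\delta p)pA}{(1-p)^{2-\delta}} + 2B\right),
\end{equation*}
and substituting back into the displayed expression for $\widetilde{V}(p)$ produces (\ref{equ:var_aloha}).
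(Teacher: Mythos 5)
Your proposal is correct and follows essentially the same route as the paper: reduce $\widetilde{V}(p)$ to $2\mathbb{E}^{x_0}_\Phi\big(1/(\mathbb{P}^{x_0}(\mathcal{C}_\Phi))^2\big)-\widetilde{D}(p)-\widetilde{D}^2(p)$ via the conditional geometric second moment, then evaluate the second negative moment by the PGFL. The only difference is that you spell out the Gamma-function bookkeeping for the radial integral (which checks out, yielding the coefficient $(2-p-\delta p)/(1-p)^{2-\delta}$), whereas the paper states the integral's value directly.
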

\begin{proof}
In the ALOHA case, in order to transmit a packet, one successful transmission is needed.
The variance of local delay for ALOHA is thus
\begin{eqnarray}
\widetilde{V}(p)&\!\!\!\!=\!\!\!\!&\mathbb{E}^{x_0}\left(\Delta^2\right)-\left(\mathbb{E}^{x_0}\left(\Delta\right)\right)^2 \nonumber \\
&\!\!\!\!\stackrel{(a)}{=}\!\!\!\!&\mathbb{E}^{x_0}_\Phi\left(\mathbb{E}^{x_0}\left(\Delta^2|\Phi\right)\right)-\widetilde{D}^2(p) \nonumber \\
&\!\!\!\!\stackrel{(b)}{=}\!\!\!\!&\mathbb{E}^{x_0}_\Phi\bigg(\frac{2-\mathbb{P}^{x_0}(\mathcal{C}_\Phi)}{(\mathbb{P}^{x_0}(\mathcal{C}_\Phi))^2}\bigg)-\widetilde{D}^2(p) \nonumber\\
&\!\!\!\!=\!\!\!\!&2\mathbb{E}^{x_0}_\Phi\bigg(\frac{1}{(\mathbb{P}^{x_0}(\mathcal{C}_\Phi))^2}\bigg)-\widetilde{D}(p)-\widetilde{D}^2(p),\nonumber\\ &&\label{equ:vartmp_aloha}
\end{eqnarray}
where $(a)$ follows from the total expectation formula, and $(b)$ follows from the second moment of the geometrically distributed random variable.
From (\ref{equ:succ_aloha}), we have
\begin{eqnarray}
\!\!\!\!\!\!&&\!\!\!\!\!\!\mathbb{E}^{x_0}_\Phi\left(\frac{1}{(\mathbb{P}^{x_0}(\mathcal{C}_\Phi))^2}\right) \nonumber \\
\!\!\!\!\!\!&\!\!\!\!\!\!=\!\!\!\!\!\!\!&p^{-2}\exp\big({2\theta r_0^\alpha WN_0}\big)\nonumber\\
\!\!\!\!\!\!&&\mathbb{E}^{x_0}_\Phi\bigg(\frac{1}{\prod_{x\in\Phi\backslash\{x_0\}}\big(p\frac{1}{1+\theta r_0^\alpha|x|^{-\alpha}}+1-p\big)^2}\bigg) \nonumber \\
\!\!\!\!\!\!&\!\!\!\!\!\!\stackrel{(c)}{=}\!\!\!\!\!\!\!&p^{-2}\exp\bigg({2\theta r_0^\alpha WN_0}\nonumber\\
\!\!\!\!\!\!&&-\lambda\int_{\mathbb{R}^d}\bigg(1-\frac{1}{\big(p\frac{1}{1+\theta r_0^\alpha|x|^{-\alpha}}+1-p\big)^2}\bigg)\mathrm{d}x\bigg) \nonumber \\
\!\!\!\!\!\!&\!\!\!\!\!\!=\!\!\!\!\!\!\!&\frac{1}{p^2}\exp\left(2\theta r_0^\alpha WN_0+\frac{\lambda c_d r_0^d\theta^\delta C(\delta)(2-p-\delta p)p}{(1-p)^{2-\delta}}\right),\nonumber\\
\!\!\!\!\!\!&& \label{equ:quadprob_aloha}
\end{eqnarray}
where $(c)$ follows from the probability generating functional (PGFL) of the PPP.
Plugging (\ref{equ:quadprob_aloha}) into (\ref{equ:vartmp_aloha}) and applying the definition of $A$ and $B$ in Theorem \ref{thm:ld fh}, we get the variance of the local delay in Theorem \ref{thm:var aloha}.
\end{proof}

\subsection{Comparison Between FHMA and ALOHA}
\subsubsection{Mean local delay}
The mean local delay in FHMA is given by $D(N)$ in (\ref{equ:localdelay}), and that in ALOHA is given by $\widetilde{D}(p)$ in (\ref{equ:localdelay_aloha}).
In ALOHA, if the transmit probability $p$ is set as $\frac{1}{N}$, by comparing $\widetilde{D}(\frac{1}{N})$ to the result of FHMA, $D(N)$ given in (\ref{equ:localdelay}), we observe that the only difference lies in the thermal noise term.
In FHMA, the entire frequency band is divided into a number of sub-bands, thus reducing the noise power. However, with ALOHA, the noise scaling effect does not exist since the entire frequency band is used for transmission.
The mean local delays of the two schemes are the same if we ignore the thermal noise term and set $p=\frac{1}{N}$.
\subsubsection{Variance of the local delay}
Comparing (\ref{equ:var}) and (\ref{equ:var_aloha}), we observe that even if the noise is ignored and the transmit probability is set as $p=\frac{1}{N}$ in ALOHA, there is still an significant difference between the variances of the two schemes when $N>1$: in FHMA, a factor $N\left(N+1\right)$ exists in the first term, while in ALOHA the factor is $2N^2$.
When $N>1$, we have $V(N)<\widetilde{V}(\frac{1}{N})$ and this illustrates that the variance of the local delay for FHMA is less than that for ALOHA (see Fig. \ref{fig:variance}).
We further observe from Fig. \ref{fig:variance} that when $N\rightarrow\infty$, the variance for FHMA stabilizes at a typically small value, while for ALOHA, the variance increases quickly with $N$.
To understand the limiting characteristics quantitatively, we evaluate how the variances of the local delay scale with $N$ in the following proposition.
\begin{prop}
For FHMA with number of sub-bands $N>1$, the variance of the local delay is
\begin{eqnarray}
V(N)=(2-\delta)A+B+O\bigg(\frac{1}{N}\bigg)=\Theta(1). \label{equ:var_lim}
\end{eqnarray}
For ALOHA with transmit probability $p=\frac{1}{N}<1$, the variance of the local delay is
\begin{eqnarray}
\widetilde{V}(\frac{1}{N})=\Theta(N^2). \label{equ:var_lim_aloha}
\end{eqnarray}
\end{prop}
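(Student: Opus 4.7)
The plan is to carry out asymptotic expansions in the small parameter $1/N$ of the closed-form expressions in Theorems \ref{thm:var fp} and \ref{thm:var aloha}, and track how the leading polynomial-in-$N$ contributions cancel when one subtracts $D(N)+D^2(N)$ (and their ALOHA counterparts). The common simplification is the identity $N^\delta(N-1)^{2-\delta}=N^2(1-1/N)^{2-\delta}$, together with its analogue $(N-1)^{1-\delta}N^\delta = N(1-1/N)^{1-\delta}$ used in Theorem \ref{thm:ld fh}, which turns every exponent in sight into a power series in $1/N$ with coefficients in $\{A,B,\delta\}$.

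For the FHMA statement (\ref{equ:var_lim}), I would first expand $(1-1/N)^{\delta-2} = 1 + (2-\delta)/N + (2-\delta)(3-\delta)/(2N^2) + O(1/N^{3})$, multiply by $\bigl[(2-(1+\delta)/N)A\bigr]/N + 2B/N$, and collect to obtain an exponent of the form $2(A+B)/N + 3(1-\delta)A/N^{2} + O(1/N^{3})$. Using $e^{x} = 1 + x + x^{2}/2 + O(x^{3})$ and multiplying by $N(N+1)$ gives the first term of (\ref{equ:var}) expanded through constant order. In parallel, the expansion already sketched after Theorem \ref{thm:ld fh} yields $D(N) = N + (A+B) + \bigl[(1-\delta)A + (A+B)^{2}/2\bigr]/N + O(1/N^{2})$, whose square contributes $D^{2}(N) = N^{2} + 2N(A+B) + 2(A+B)^{2} + 2(1-\delta)A + O(1/N)$. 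Subtracting $D(N)+D^{2}(N)$ from the first term should then cancel the $N^{2}$, $N$, and $(A+B)^{2}$ contributions exactly and leave $(1-\delta)A + (A+B) + O(1/N) = (2-\delta)A + B + O(1/N)$, giving both the explicit constant and $V(N)=\Theta(1)$.

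For the ALOHA statement (\ref{equ:var_lim_aloha}), setting $p=1/N$ and applying the same kind of expansion to (\ref{equ:localdelay_aloha}) gives $\widetilde{D}(1/N) = Ne^{B}(1+O(1/N)) = \Theta(N)$, hence $\widetilde{D}^{2}(1/N) = N^{2}e^{2B}+\Theta(N)$, while the first term of (\ref{equ:var_aloha}) becomes $2N^{2}e^{2B}(1+O(1/N))$. The key point is that the prefactors are now $2$ versus $1$, so the leading $N^{2}$ pieces no longer cancel: $\widetilde{V}(1/N) = N^{2}e^{2B} + O(N) = \Theta(N^{2})$.

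The only real obstacle is bookkeeping: in the FHMA case one must retain one extra order in $D(N)$ because its $1/N$ coefficient lifts to constant order in $D^{2}(N)$ and is precisely what cancels the matching term coming from the quadratic-in-$1/N$ piece of the exponential. Once everything is written as a series in $1/N$ through order $1/N^{2}$ inside each exponential, the cancellations are mechanical and no new ingredients beyond Theorems \ref{thm:ld fh}--\ref{thm:var aloha} are required.
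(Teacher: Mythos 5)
Your proposal is correct and follows essentially the same route as the paper: expand the exponents in Theorems \ref{thm:ld fh} and \ref{thm:var fp} as power series in $1/N$ (obtaining the exponent $2(A+B)/N+3(1-\delta)A/N^2+O(1/N^3)$ and $D(N)=N+(A+B)+[(A+B)^2/2+(1-\delta)A]/N+O(1/N^2)$), then observe that the $N^2$, $N$, and $(A+B)^2$ terms cancel against $D(N)+D^2(N)$, leaving $(2-\delta)A+B+O(1/N)$, while for ALOHA the prefactor $2/p^2=2N^2$ versus $\widetilde{D}^2\sim N^2e^{2B}$ leaves an uncancelled $\Theta(N^2)$ term. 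The paper omits the ALOHA details entirely, so your sketch of that half is if anything slightly more explicit than the published proof.
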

\begin{proof}
For FHMA with number of sub-bands $N>1$, from (\ref{equ:localdelay}), we have
\begin{eqnarray}
D(N)&\!\!\!\!=\!\!\!&N\exp\bigg(\frac{A+B}{N}+\frac{(1-\delta)A}{N^2}+O\bigg(\frac{1}{N^3}\bigg)\bigg) \nonumber \\
&\!\!\!\!=\!\!\!&N+(A+B) \nonumber\\
&\!\!\!\!\!\!\!&\!\!\!\!+\bigg(\frac{(A+B)^2}{2}+(1-\delta)A\bigg)\frac{1}{N}+O\bigg(\frac{1}{N^2}\bigg). \nonumber \\
&&\label{equ:var_lim1}
\end{eqnarray}
Then, $D^2(N)$ is given by
\begin{eqnarray}
D^2(N)&=&N^2+2(A+B)N+2(A+B)^2 \nonumber \\
&&+2(1-\delta)A+O\left(\frac{1}{N}\right). \label{equ:var_lim2}
\end{eqnarray}
From (\ref{equ:var}), we have the variance of the local delay as
\begin{eqnarray}
V(N)&\!\!\!\!=\!\!\!&N(N+1)\exp\bigg(\frac{2B}{N}+\frac{2A}{N}-\frac{3(\delta-1)A}{N^2} \nonumber\\
&\!\!\!\!\!\!\!\!&+O\bigg(\frac{1}{N^3}\bigg)\bigg)-D(N)-D^2(N) \nonumber \\
&\!\!\!\!=\!\!\!&N^2+(2A+2B+1)N+2(A+B)\nonumber\\
&\!\!\!\!\!\!\!\!&+2(A+B)^2-3(\delta-1)A\nonumber\\
&\!\!\!\!\!\!\!\!&-D(N)-D^2(N)+O\left(\frac{1}{N}\right). \label{equ:var_lim3}
\end{eqnarray}
Plugging (\ref{equ:var_lim1}) and (\ref{equ:var_lim2}) into (\ref{equ:var_lim3}), we get the result in (\ref{equ:var_lim}). The derivations of the limiting for ALOHA is similar, and we omit the details of that proof.
\end{proof}

\begin{figure}[!ht]
\centering
\includegraphics[width=0.45\textwidth]{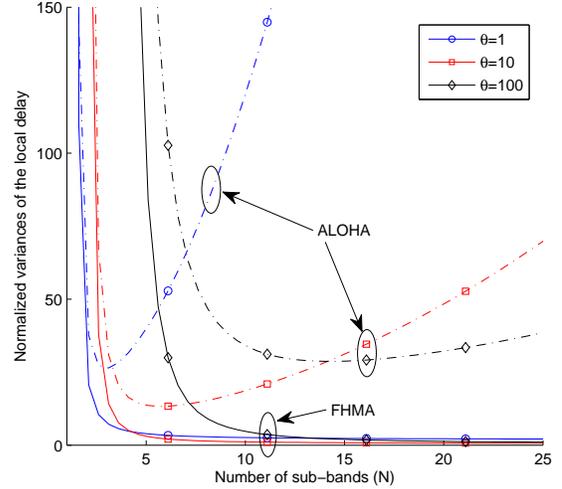}
\caption{Normalized variances of the local delay for FHMA, $\frac{V(N)}{(\log_2(1+\theta))^2}$, and for ALOHA, $\frac{\widetilde{V}(\frac{1}{N})}{(\log_2(1+\theta))^2}$, as a function of the number of sub-bands $N$, when $d=2$, $\lambda=0.01\mathrm{m}^{-2}$, $\alpha=4$, $r=5$m, and thermal noise ignored.}
\label{fig:variance}
\end{figure}

\subsection{Finiteness of the Mean Local Delay}
To understand why the mean local delay goes to infinity when $N$ is set to one, let us consider the expression for the mean local delay $D=N\mathbb{E}^{x_0}_\Phi\left(\frac{1}{\mathbb{P}^{x_0}(\mathcal{C}_\Phi)}\right)$ in FHMA, where $\frac{1}{\mathbb{P}^{x_0}(\mathcal{C}_\Phi)}$ is a random variable with support set $(1,+\infty)$ because it is the reciprocal of the successful transmit probability conditioned upon the PPP $\Phi$.
When $N=1$, the expectation $\mathbb{E}^{x_0}_\Phi\left(\frac{1}{\mathbb{P}^{x_0}(\mathcal{C}_\Phi)}\right)$ is infinity because the ccdf of $\frac{1}{\mathbb{P}^{x_0}(\mathcal{C}_\Phi)}$ has a heavy tail.
To show the heavy tail behavior, let us derive a lower bound for the ccdf of the local delay when $N=1$. Ignoring the thermal noise term in (\ref{equ:succ}), for $N=1$ and any $t\in(1,+\infty)$, we have
\begin{eqnarray}
\!\!\mathbb{P}^{x_0}\bigg(\frac{1}{\mathbb{P}^{x_0}(\mathcal{C}_\Phi)}>t\bigg)&\!\!\!\!=\!\!\!\!&\mathbb{P}^{x_0}\bigg(\!\!\!\prod_{x\in\Phi\backslash\{x_0\}}\!\!\!\!\!\big(1+\theta r_0^\alpha|x|^{-\alpha}\big)>t\bigg) \nonumber \\
\!\!&\!\!\!\!>\!\!\!\!&\mathbb{P}^{x_0}\left(1+\theta r_0^\alpha|x_{\mathrm{min}}|^{-\alpha}>t\right)\nonumber \\
\!\!&\!\!\!\!>\!\!\!\!&\mathbb{P}^{x_0}\left(\theta r_0^\alpha|x_{\mathrm{min}}|^{-\alpha}>t\right)\nonumber \\
\!\!&\!\!\!\!=\!\!\!\!&\mathbb{P}^{x_0}\left(|x_{\mathrm{min}}|<\theta^{\frac{1}{\alpha}}t^{-\frac{1}{\alpha}}r_0\right), \label{equ:1}
\end{eqnarray}
where $x_{\mathrm{min}}=\mathrm{argmin}_{x\in\Phi\setminus\{x_0\}}|x|$ is the nearest interfering transmitter to the receiver.
The distance between the receiver and its nearest interfering transmitter has cumulative distribution function (cdf) as $b(r)=1-\exp(- c_d\lambda r^d)$. Substituting this cdf into (\ref{equ:1}) and letting $\delta=\frac{2}{\alpha}$, we get
\begin{eqnarray}
\!\!\!\!\mathbb{P}^{x_0}\left(\frac{1}{\mathbb{P}^{x_0}(\mathcal{C}_\Phi)}>t\right)&\!\!\!\!>\!\!\!\!&1-\exp\left(-c_d\lambda\theta^\delta t^{-\delta}r_0^d\right) \qquad \label{equ:ccdf_lowerb} \\
\!\!\!\!&\!\!\!\!\sim\!\!\!\!&\frac{c_d\lambda\theta^\delta r_0^d}{t^{\delta}}, \quad t\rightarrow\infty.
\end{eqnarray}
The function $g(t)=1-\exp\left(-C_0 t^{-\delta}\right)$, where $C_0=c_d\lambda\theta^\delta r_0^d$, gives a lower bound for the ccdf of the local delay when $N=1$ (see Fig. \ref{fig:lbound}).

By the identity of $\mathbb{E}(X)=\int_0^\infty\mathbb{P}(X>t)\mathrm{d}t$ for any non-negative random variable $X$ and the inequality $e^{-x}<1-x+\frac{x^2}{2}$ for $x>0$, we have
\begin{eqnarray}
\!\!\!\!\!\!\!\!\mathbb{E}^{x_0}_\Phi\left(\frac{1}{\mathbb{P}^{x_0}(\mathcal{C}_\Phi)}\right)&\!\!\!\!=\!\!\!\!&\int_0^\infty\left(\mathbb{P}^{x_0}\left(\frac{1}{\mathbb{P}^{x_0}(\mathcal{C}_\Phi)}>t\right)\right)\mathrm{d}t \nonumber \\
\!\!\!\!\!\!\!\!&\!\!\!\!=\!\!\!\!&1+\int_1^\infty\left(\mathbb{P}^{x_0}\left(\frac{1}{\mathbb{P}^{x_0}(\mathcal{C}_\Phi)}>t\right)\right)\mathrm{d}t \nonumber \\
\!\!\!\!\!\!\!\!&\!\!\!\!>\!\!\!\!&1+\int_1^\infty \left(1-\exp\left(-C_0 t^{-\delta}\right)\right)\mathrm{d}t \nonumber\\
\!\!\!\!\!\!\!\!&\!\!\!\!>\!\!\!\!&1+\int_1^\infty\left( C_0t^{-\delta}-\frac{1}{2}C_0^2t^{-2\delta}\right)\mathrm{d}t\nonumber\\
\!\!\!\!\!\!\!\!&\!\!\!\!=\!\!\!\!&\infty.
\end{eqnarray}


\begin{figure}[!ht]
\centering
\includegraphics[width=0.5\textwidth]{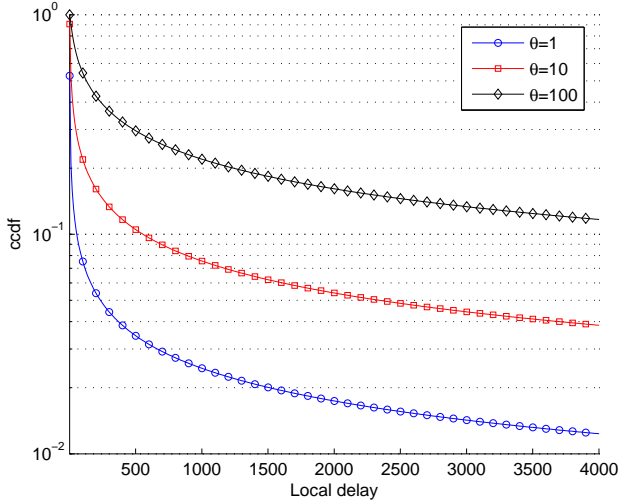}
\caption{Lower bound for the ccdf of the local delay, given by (\ref{equ:ccdf_lowerb}), when $N=1$ in the 2-dimensional case ($d=2$). The intensity of transmitters is $\lambda=0.01\mathrm{m}^{-2}$ and the path loss exponent is $\alpha=4$.}
\label{fig:lbound}
\end{figure}

When FHMA is applied with $N>1$, there is an additional term $\frac{N-1}{N}$ in the success probability given by (\ref{equ:succ}), which prevents $\mathbb{P}^{x_0}(\mathcal{C}_\Phi)$ from getting too small when $|x|$ approaches zero.
It can be interpreted intuitively that, although there are some interfering transmitters very close to the receiver, the application of FHMA guarantees that there is always a relatively large probability that those transmitters do not continuously
cause interference to the receiver.

\subsection{The effect of bounded path loss function}
\label{sec:bounded}
In the discussion above, we have considered the unbounded path loss function $l(r)=\kappa r^{-\alpha}$.
Though the unbounded path loss function is an idealized model, it gives an effective approximation to the actual path loss and results in concise results \cite{inaltekin2009unbounded,gulati2012joint}.
In this subsection, we compare the results derived under the unbounded path loss function to that under the bounded path loss function $l(r)=\kappa(r^{\alpha}+\varepsilon)^{-1}$, where $\varepsilon>0$.
The unbounded path loss function is the limiting case of the bounded path loss function as $\varepsilon\rightarrow0$.

Without loss of generality, we take FHMA as an example. By replacing $l(r)=\kappa r^{-\alpha}$ with $l(r)=\kappa(r^{\alpha}+\varepsilon)^{-1}$ in the derivations of Theorem \ref{thm:ld fh} and Theorem \ref{thm:var fp}, we obtain the mean and variance of the local delay under the assumption of bounded path loss function as follows.

\begin{eqnarray}
D_\varepsilon(N)=N\exp\bigg(\frac{(r_0^\alpha+\varepsilon)\theta WN_0}{N}\nonumber\\
+\frac{\lambda c_d (r_0^\alpha+\varepsilon)\theta C(\delta)}{(N\varepsilon+ (N-1)\theta(r_0^\alpha+\varepsilon))^{1-\delta}N^{\delta}}\bigg), \label{eqn:D_epsilon}
\end{eqnarray}

\begin{eqnarray}
V_\varepsilon(N)=N\left(N+1\right)
\exp\bigg(\frac{2\theta r_0^\alpha WN_0}{N}\nonumber\\
+\frac{(2N\varepsilon+(2N-1-\delta)\theta(r_0^\alpha+\varepsilon))\lambda c_d (r_0^\alpha+\varepsilon)\theta C(\delta)}{N^{\delta}(N\varepsilon+(N-1)\theta(r_0^\alpha+\varepsilon))^{2-\delta}}\bigg)\nonumber
\end{eqnarray}
\begin{equation}
\qquad\qquad-D_\varepsilon(N)-D_\varepsilon^2(N). \label{eqn:V_epsilon}
\end{equation}


When $N=1$, both $D_\varepsilon(1)$ and $V_\varepsilon(1)$ are finite if $\varepsilon>0$.
Setting $\varepsilon=0$ reproduces the results for the unbounded model, as expected.
As can be seen, the difference between the results for the unbounded model and the bounded one decreases with increasing $r_0$ or decreasing $\theta$.
In order to evaluate the difference, we set $\varepsilon$ as the typical value $\kappa$ (i.e., the path loss becomes $l(r)=\kappa (r^{\alpha}+\kappa)^{-1}$) such that the received power never exceeds the transmitted one without fading.
The value of $\kappa$ is the path loss at $1$m TX-RX separation, which is rather small, typically like $-30$dB \cite[Ch. 3]{rappaport1996wireless}.
Therefore, as $\varepsilon=\kappa\rightarrow0$, the mean local delay for $N=1$ is approximated as
\begin{equation}
D_\varepsilon(1)\sim\exp\left(\theta r_0^\alpha WN_0 + \frac{\lambda c_d r_0^\alpha \theta C(\delta)}{\varepsilon^{1-\delta}}\right), \quad \varepsilon\rightarrow 0.
\end{equation}
It is observed that $D_\varepsilon(1)$ increases exponentially with respect to $1/\epsilon^{1-\delta}$ as $\varepsilon\rightarrow0$.
For the realistic bounded path loss model, in which $\varepsilon$ is rather small, the mean local delay when $N=1$ is finite though extremely large.
Thus, we can conclude that for the realistic bounded path loss model, when $N>2$ the boundedness of the path loss has only negligible effect on the mean and the variance of the local delay; when $N=1$ the mean local delay is extremely large and thus can be considered as infinity for practical purposes.

\section{Optimal Parameters To Minimize Mean Local Delay}
\label{sec:opt_par}
In this section, we analyze the optimal number of sub-bands in FHMA and optimal transmit probability in ALOHA to minimize the mean local delay.
Deriving the optimal parameters is difficult, and the results may not be compact; thus, we resort to deriving tight bounds for the optimal values.
\subsection{FHMA}
In the derivation, we relax $N$ to be continuous and subsequently take the actual optimal number to be a nearby integer.
The following theorem gives the bounds of the optimal number of sub-bands.
\begin{thm}
The bounds of the optimal number of sub-bands that minimizes the mean local delay are given by
\begin{eqnarray}
\!\!\!\!\!\!\!\!&&\!\!\!\!N_{\mathrm{opt}}\in[\lfloor t_0\rfloor,\lceil t_0\rceil+2],\quad t_0=\lambda c_d r_0^d\theta^{\delta}C(\delta)+\theta r_0^\alpha WN_0.\nonumber\\
\!\!\!\!\!\!\!\!&&\!\!\!\! \label{equ:thmbounds}
\end{eqnarray}
\end{thm}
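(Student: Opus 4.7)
The plan is to relax $N$ to a real variable on $(1,\infty)$, locate the continuous stationary point of $\log D(N)$ via elementary inequalities, and then deduce the bounds on the integer minimizer. Writing
\[
\log D(N) = \log N + \frac{A}{(N-1)^{1-\delta}N^{\delta}} + \frac{B}{N}
\]
and multiplying $(\log D)'(N)$ by $N$, the stationary-point condition becomes $F(N)=0$ where
\[
F(N) := 1 - \frac{A(N-\delta)}{(N-1)^{2-\delta}N^{\delta}} - \frac{B}{N}.
\]
Since $F(1^{+})=-\infty$ and $F(\infty)=1$, a zero exists; the crux is to sandwich it between $t_{0}$ and $t_{0}+2$.

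The key step I would establish is the two-sided estimate
\[
\frac{1}{N} \;\leq\; \frac{N-\delta}{(N-1)^{2-\delta}N^{\delta}} \;\leq\; \frac{1}{N-2},\qquad N>2,\ \delta\in(0,1).
\]
For the lower inequality, clearing denominators reduces it to $(N-\delta)N^{1-\delta}\geq (N-1)^{2-\delta}$, which follows at once from $N-\delta\geq N-1$ and $N^{1-\delta}\geq(N-1)^{1-\delta}$. For the upper inequality, $N^{\delta}\geq(N-1)^{\delta}$ gives $(N-1)^{2-\delta}N^{\delta}\geq(N-1)^{2}$, hence the ratio is at most $N/(N-1)^{2}$, and the identity $(N-1)^{2}=N(N-2)+1$ yields $N/(N-1)^{2}\leq 1/(N-2)$. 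Substituting into $F$ produces $F(N)\leq 1-t_{0}/N$ (so $F<0$ whenever $N<t_{0}$) and $F(N)\geq 1-A/(N-2)-B/N\geq 1-t_{0}/(N-2)$ (so $F>0$ whenever $N>t_{0}+2$). Consequently $D$ is strictly decreasing on $(1,t_{0})$ and strictly increasing on $(t_{0}+2,\infty)$.

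Passing back to integers: any integer $k\leq\lfloor t_{0}\rfloor-1$ is beaten by $k+1$, since $k+1\leq\lfloor t_{0}\rfloor\leq t_{0}$ places both inside the decreasing region (a short continuity argument using $D'<0$ on $(1,t_{0})$ handles the boundary case $k+1=t_{0}$); symmetrically, any integer $k\geq\lceil t_{0}\rceil+3$ is beaten by $k-1$ because $k-1\geq t_{0}+2$ puts both in the increasing region. This yields $N_{\mathrm{opt}}\in[\lfloor t_{0}\rfloor,\lceil t_{0}\rceil+2]$. The main obstacle I anticipate is the upper-side bound, since the mixed-base product $(N-1)^{2-\delta}N^{\delta}$ does not admit a sharp one-line estimate, and the slack in replacing it with $(N-1)^{2}$ is precisely what creates the additive gap of $2$ between the lower and upper bounds on $N^{\star}$; any attempt to sharpen the theorem would have to route through a finer weighted-AM-GM bound such as $(N-1)^{1-\delta}N^{\delta}\geq N-(1-\delta)$.
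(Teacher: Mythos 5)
Your proof is correct and follows essentially the same route as the paper's: both reduce to the sign of the same stationarity function $F(N)=1-\tfrac{A(N-\delta)}{(N-1)^{2-\delta}N^{\delta}}-\tfrac{B}{N}$ and use the same elementary estimates ($N-\delta\geq N-1$, $N^{1-\delta}\geq(N-1)^{1-\delta}$, and $(N-1)^2/N>N-2$), the only difference being that the paper bounds the root of the fixed-point identity $N_{\mathrm{opt}}=A(\tfrac{N_{\mathrm{opt}}}{N_{\mathrm{opt}}-1})^{2-\delta}\tfrac{N_{\mathrm{opt}}-\delta}{N_{\mathrm{opt}}}+B$ while you sandwich the sign of $F$ directly. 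Your explicit passage from the continuous interval $(t_0,t_0+2)$ to the integer minimizer is a bit more careful than the paper's, which simply asserts the integer bounds.
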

\begin{proof}
Based on the result of (\ref{equ:localdelay}), we get the derivative of the mean local delay $D'(N)$ when $N>1$ as follows
\begin{equation}
D'(N)=f(N)\exp\big(\lambda c_dr_0^d\theta^{\delta}C(\delta)(N-1)^{\delta-1}N^{-\delta}\nonumber
\end{equation}
\begin{equation}
 \qquad +\theta r_0^\alpha WN_0N^{-1}\big),
\end{equation}
where
\begin{equation}
f(N)=1-\frac{1}{N}\bigg(\lambda c_d r_0^d\theta^{\delta}C(\delta)\bigg(\frac{N}{N-1}\bigg)^{1-\delta}\frac{N-\delta}{N-1}\nonumber
\end{equation}
\begin{equation}
+\theta r_0^\alpha WN_0\bigg).
\end{equation}
We observe that $f(N)$ is strictly monotonically increasing in $N$; this means that there is only one optimal value $N_{\mathrm{opt}}$ that satisfies $D'(N_{\mathrm{opt}})=0$, which is given by $f(N_{\mathrm{opt}})=0$. From $f(N_{\mathrm{opt}})=0$, we get
\begin{eqnarray}
N_{\mathrm{opt}}&=&\lambda c_d r_0^d\theta^{\delta}C(\delta)\left(\frac{N_{\mathrm{opt}}}{N_{\mathrm{opt}}-1}\right)^{2-\delta}\frac{N_{\mathrm{opt}}-\delta}{N_{\mathrm{opt}}}\nonumber\\
&&+\theta r_0^\alpha WN_0. \label{equ:Nopt}
\end{eqnarray}
Since $N_{\mathrm{opt}}/(N_{\mathrm{opt}}-1)>1$ and $0<\delta<1$, we have
\begin{eqnarray}
N_{\mathrm{opt}}&>&\lambda c_d r_0^d\theta^{\delta}C(\delta)\left(\frac{N_{\mathrm{opt}}}{N_{\mathrm{opt}}-1}\right)^{2-\delta}\frac{N_{\mathrm{opt}}-1}{N_{\mathrm{opt}}}\nonumber\\
&&+\theta r_0^\alpha WN_0 \nonumber \\
&>&\lambda c_d r_0^d\theta^{\delta}C(\delta)+\theta r_0^\alpha WN_0, \label{equ:lower}
\end{eqnarray}
This gives a lower bound for the optimal value $N_{\mathrm{opt}}$. Next we derive an upper bound for $N_{\mathrm{opt}}$.
From (\ref{equ:Nopt}), we have
\begin{eqnarray}
N_{\mathrm{opt}}&<&\lambda c_d r_0^d\theta^{\delta}C(\delta)\left(\frac{N_{\mathrm{opt}}}{N_{\mathrm{opt}}-1}\right)^2+\theta r_0^\alpha WN_0 \nonumber \\
&<&\left(\lambda c_d r_0^d\theta^{\delta}C(\delta)+\theta r_0^\alpha WN_0\right)\left(\frac{N_{\mathrm{opt}}}{N_{\mathrm{opt}}-1}\right)^2.  \nonumber
\end{eqnarray}
Then, we have
\begin{eqnarray}
\frac{(N_{\mathrm{opt}}-1)^2}{N_{\mathrm{opt}}}&<&\lambda c_d r_0^d\theta^{\delta}C(\delta)+\theta r_0^\alpha WN_0. \nonumber
\end{eqnarray}
\begin{eqnarray}
N_{\mathrm{opt}}-2+\frac{1}{N_{\mathrm{opt}}}&<&\lambda c_d r_0^d\theta^{\delta}C(\delta)+\theta r_0^\alpha WN_0. \nonumber
\end{eqnarray}
\begin{eqnarray}
N_{\mathrm{opt}}<&\lambda c_d r_0^d\theta^{\delta}C(\delta)+\theta r_0^\alpha WN_0+2. \label{equ:upper}
\end{eqnarray}
Combining (\ref{equ:lower}) and (\ref{equ:upper}) and noting that $N$ is an integer, we get the bounds of $N_{\mathrm{opt}}$.
\end{proof}

The bounds given here are rather simple and tight.
If frequency splitting is not applied (the case of $N=1$), the mean local delay will surely be infinite. To guarantee a finite mean local delay, the value of $N$ should be at least two. It is valuable to investigate for which range of parameters the optimal value $N_{\mathrm{opt}}$ will be two. The following corollary gives such a condition.
\begin{cor}
If the intensity of transmitters $\lambda$ satisfies the following inequality,
\begin{eqnarray}
\lambda&<&\frac{\ln\frac{3}{2}-\frac{1}{6}\theta r_0^\alpha WN_0}{c_dr_0^d\theta^\delta C(\delta)(2^{-\delta}-2^{\delta-1}3^{-\delta})},\label{equ:condition}
\end{eqnarray}
the optimal number of sub-bands $N_{\mathrm{opt}}$ that minimizes the mean local delay will be two.
\end{cor}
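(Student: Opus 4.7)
The plan is to exploit the unimodality of $D(N)$ already established in the proof of the preceding theorem. Since $f(N)$ is strictly monotone in $N$, $D(N)$ has a unique critical point $t^{\star}>1$; because $D(1)=\infty$ and $D(N)\to\infty$ as $N\to\infty$, $t^{\star}$ is a global minimum, and $D$ is strictly decreasing on $(1,t^{\star})$ and strictly increasing on $(t^{\star},\infty)$. Consequently the integer minimizer $N_{\mathrm{opt}}$ equals $2$ if and only if $D(2)\le D(3)$: when $t^{\star}\le 2$ the inequality holds automatically; when $t^{\star}\in(2,3)$ it is precisely the condition that $2$ beats its only serious competitor $3$; and when $t^{\star}\ge 3$ monotonicity on $[2,3]$ forces $D(3)<D(2)$, so $N_{\mathrm{opt}}\ne 2$.

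Having reduced the claim to the single comparison $D(2)\le D(3)$, I would evaluate both quantities directly from (\ref{equ:localdelay}):
\begin{align*}
D(2) &= 2\exp\!\left(A\cdot 2^{-\delta}+\tfrac{B}{2}\right),\\
D(3) &= 3\exp\!\left(A\cdot 2^{\delta-1}3^{-\delta}+\tfrac{B}{3}\right).
\end{align*}
Taking logarithms of $D(2)\le D(3)$ and collecting the $A$- and $B$-terms on opposite sides gives
\begin{equation*}
A\bigl(2^{-\delta}-2^{\delta-1}3^{-\delta}\bigr)\le\ln\tfrac{3}{2}-\tfrac{B}{6}.
\end{equation*}
Substituting $A=\lambda c_d r_0^d\theta^{\delta}C(\delta)$ and $B=\theta r_0^{\alpha}WN_0$ and dividing through by the (strictly positive) coefficient $c_d r_0^d\theta^{\delta}C(\delta)(2^{-\delta}-2^{\delta-1}3^{-\delta})$ produces the inequality (\ref{equ:condition}) verbatim.

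The only step requiring a short separate check is the positivity of that coefficient. Factoring $2^{-\delta}-2^{\delta-1}3^{-\delta}=6^{-\delta}\bigl(3^{\delta}-2^{2\delta-1}\bigr)$ reduces positivity to $2\cdot 3^{\delta}>4^{\delta}$, i.e., $(4/3)^{\delta}<2$, which holds for every $\delta\in(0,1)$ because $(4/3)^{\delta}<4/3<2$. Everything else is routine algebra; the main conceptual move is the observation that the unimodality proved earlier collapses the optimality criterion for $N_{\mathrm{opt}}=2$ into a single two-point comparison, which makes the otherwise implicit equation (\ref{equ:Nopt}) avoidable.
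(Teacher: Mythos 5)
Your proposal is correct and follows essentially the same route as the paper: invoke the unimodality of $D(N)$ established in the preceding theorem to reduce the claim to the single comparison $D(2)<D(3)$, then substitute (\ref{equ:localdelay}) and rearrange to obtain (\ref{equ:condition}). Your added checks (the explicit case analysis on the location of the continuous minimizer and the positivity of $2^{-\delta}-2^{\delta-1}3^{-\delta}$) are details the paper leaves implicit, and your algebra is accurate.
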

\begin{proof}
Since we have proved that mean local delay $D(N)$ is a function that first decreases and then increases with $N$, the condition for $N_{\mathrm{opt}}=2$ is $D(2)<D(3)$. By substituting the expression of mean local delay (\ref{equ:localdelay}) into $D(2)<D(3)$, we get the condition in the corollary.
Notice that the right side of the inequality in (\ref{equ:condition}) may be negative, in which case the condition for $\lambda$ cannot be satisfied and then the optimal number of sub-bands will never be two.
\end{proof}

In Fig. \ref{fig:Nopt_bound}, we plot the optimal number of sub-bands $N_{\mathrm{opt}}$ and its bounds given in (\ref{equ:thmbounds}) as a function of the path loss exponent for different $\theta$. The optimal number $N_{\mathrm{opt}}$ is obtained by numerical calculation of the solution of the equation (\ref{equ:Nopt}). We observe from Fig. \ref{fig:Nopt_bound} that the bounds are quite tight and give excellent approximation of the value $N_{\mathrm{opt}}$. This figure also shows that the optimal value $N_{\mathrm{opt}}$ decreases with increasing path loss exponent, which verifies our aforementioned discussion regarding Fig. \ref{fig:change}. The curves show that when the path loss exponent is fixed, the optimal number $N_{\mathrm{opt}}$ is an increasing function of the SINR threshold $\theta$, which can be also perceived from the expression (\ref{equ:thmbounds}). This is reasonable since with larger SINR threshold, the condition for successful transmission becomes harsher, and more sub-bands are needed to meet the stronger requirements. In Fig. \ref{fig:D_Nopt}, we plot the minimum value of the normalized mean local delay when the optimum number of sub-bands $N_{\mathrm{opt}}$ is used. We observe from Fig. \ref{fig:D_Nopt} that there are intersection points between different curves, implying that the choice of SINR threshold $\theta$ has direct impact on the mean local delay.
In Section \ref{sec:opt_thd}, we will try to obtain the optimal SINR threshold.

\begin{figure}[!ht]
\centering
\includegraphics[width=0.5\textwidth]{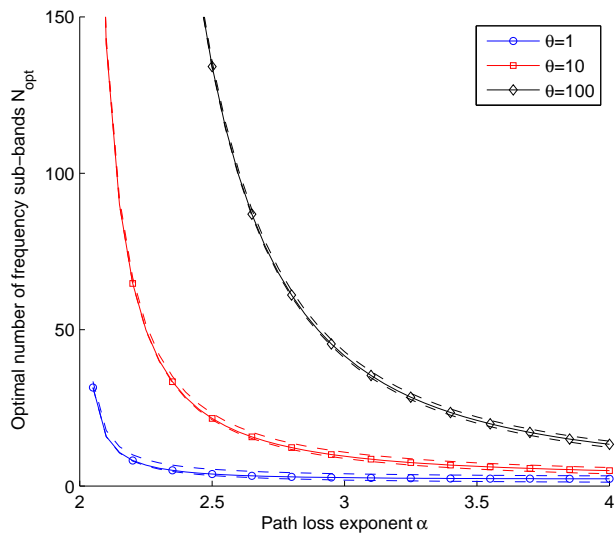}
\caption{Optimal number of sub-bands $N_{\mathrm{opt}}$ and its bounds $(t_0,t_0+2)$ as a function of the path loss exponent for varying $\alpha$.}
\label{fig:Nopt_bound}
\end{figure}

\begin{figure}[!ht]
\centering
\includegraphics[width=0.5\textwidth]{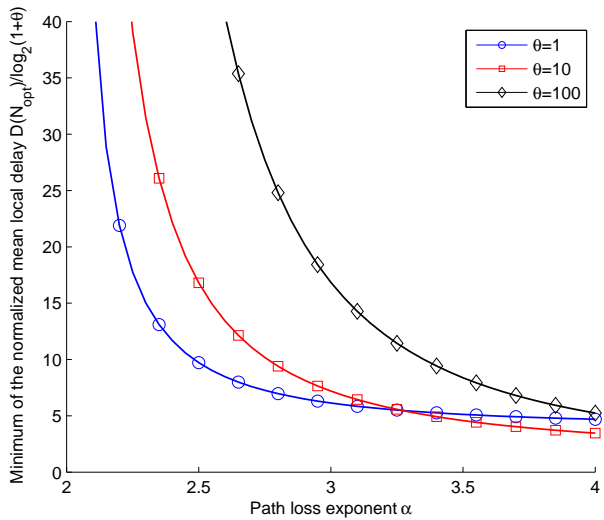}
\caption{Minimum of the normalized mean local delay $\frac{D(N_{\mathrm{opt}})}{\log_2(1+\theta)}$ as a function of the path loss exponent $\alpha$.}
\label{fig:D_Nopt}
\end{figure}
\subsection{ALOHA}
Since the expressions of the mean local delays for ALOHA and for FHMA are the same when $p=1/N$ and thermal noise ignored, we give the optimal transmit probability in the following theorem directly and omit the proof.
\begin{thm}
The bounds of the optimal transmit probability which minimizes the mean local delay is
\begin{eqnarray}
p_{\mathrm{opt}}\in\left(\frac{1}{\lambda c_d r_0^d\theta^{\delta}C(\delta)+2},\frac{1}{\lambda c_d r_0^d\theta^{\delta}C(\delta)}\right). \label{equ:thmbounds_aloha}
\end{eqnarray}
\end{thm}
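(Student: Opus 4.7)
The plan is to reduce the ALOHA optimization to the FHMA one via the change of variables $N=1/p$. First, observe that in
$$\widetilde{D}(p) = \frac{1}{p}\exp\left(\frac{pA}{(1-p)^{1-\delta}} + B\right)$$
the thermal-noise contribution enters only as the multiplicative constant $e^B$, since $B=\theta r_0^\alpha W N_0$ does not depend on $p$. Minimizing $\widetilde{D}(p)$ over $p\in(0,1)$ is therefore equivalent to minimizing $\psi(p):=\frac{1}{p}\exp\!\left(\frac{pA}{(1-p)^{1-\delta}}\right)$. This already explains why $B$ is absent from the bound claimed in the theorem, in contrast with the FHMA case where bandwidth splitting also scales the noise.

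Next, I would substitute $N=1/p$ and verify by direct algebra that $\psi(1/N) = D(N)\big|_{B=0}$, where $D(N)$ is the FHMA mean local delay from Theorem~\ref{thm:ld fh}. Hence $p_{\mathrm{opt}}=1/N^{*}$, where $N^{*}$ is the continuous minimizer of $D(N)\big|_{B=0}$ over $N>1$.

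I would then invoke the derivation of the preceding FHMA theorem. That proof established the stationary equation~(\ref{equ:Nopt}) and, using only the elementary facts $N/(N-1)>1$ and $0<\delta<1$, obtained the two-sided bound $A+B<N_{\mathrm{opt}}<A+B+2$ as a statement about the continuously-relaxed minimizer, \emph{before} any rounding to integers; see~(\ref{equ:lower})--(\ref{equ:upper}). Setting $B=0$ yields $A<N^{*}<A+2$. Since $p$ is inherently a continuous parameter, no rounding step is needed, and inverting the relation $p_{\mathrm{opt}}=1/N^{*}$ produces exactly the open interval stated in the theorem.

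The only additional point to verify is that the critical point of $\widetilde{D}$ is unique and is the global minimum on $(0,1)$. This would follow by showing that the $p$-derivative of $\widetilde{D}$ factors as a positive prefactor times a function strictly monotone in $p$, mirroring the monotonicity of $f(N)$ used in the FHMA argument; the calculation is a direct translation under $N=1/p$, noting that $\widetilde{D}(p)\to\infty$ as either $p\downarrow 0$ (the $1/p$ factor blows up) or $p\uparrow 1$ (the $(1-p)^{-(1-\delta)}$ term in the exponent blows up), which pins the minimum in the interior. No genuine obstacle is expected: every inequality used in the FHMA bound carries over verbatim once the $e^B$ factor has been peeled off.
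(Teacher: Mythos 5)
Your proposal is correct and takes essentially the same route as the paper, which omits the proof entirely and justifies the theorem only by noting that the ALOHA mean local delay coincides with the FHMA one under $p=1/N$ with thermal noise ignored; you simply spell out that reduction and carry the continuous (pre-rounding) bounds $A<N^{*}<A+2$ over via $p_{\mathrm{opt}}=1/N^{*}$. Your observation that $e^{B}$ factors out as a $p$-independent constant is in fact a slightly cleaner justification for why $B$ is absent from the ALOHA bounds than the paper's ``thermal noise ignored'' phrasing.
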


\section{Optimal SINR Threshold $\theta$}
\label{sec:opt_thd}
In the discussion above, we have already derived tight bounds for the optimal number of sub-bands $N_{\mathrm{opt}}$
and the optimal transmit probability $p_{\mathrm{opt}}$ to minimize the mean local delay when the SINR threshold $\theta$ is fixed.
The following analysis will focus on deriving the optimal threshold $\theta_{\mathrm{opt}}$ or its bounds when the number of sub-bands $N$ or the transmit probability $p$ is fixed.
However, as mentioned in Section \ref{sec:model}, the duration of each time slot is proportional to $\frac{1}{\log_2(1+\theta)}$.
In order to characterize the actual delay, we slightly modify the optimization objective as the normalized mean local delay, i.e., $\frac{D(N)}{\log_2(1+\theta)}$ for FHMA and $\frac{\widetilde{D}(p)}{\log_2(1+\theta)}$ for ALOHA.
In the following analysis, we consider two asymptotic regimes: the interference-limited regime and the noise-limited regime.
The interference-limited regime is typically encountered in cellular radio systems like CDMA networks, where the interference dominates over the thermal noise.
The noise-limited regime is appropriate if the distance between concurrent transmitters is much larger than the distance of the typical link, in which case the interference in the network is negligible.

\subsection{FHMA}
The following theorem gives the optimal threshold $\theta_{\mathrm{opt}}$ and its bounds for FHMA in the interference-limited regime and noise-limited regime respectively.
\begin{thm}
\label{thm:theta_fh}
In the noise-limited regime, the optimal threshold $\theta_{\mathrm{opt}}$ that minimizes the normalized mean local delay $\frac{D(N)}{\log_2(1+\theta)}$ for FHMA is given by
\begin{eqnarray}
\theta_\mathrm{opt}&=&\exp\left(\mathcal{W}\left(\frac{N}{r_0^{\alpha} WN_0}\right)\right)-1, \label{equ:threshold}
\end{eqnarray}
where $\mathcal{W}(z)$ is the Lambert $\mathcal{W}$ function which solves $\mathcal{W}(z)e^{\mathcal{W}(z)}=z$.
In the interference-limited regime, the bounds of $\theta_{\mathrm{opt}}$ are given by
\begin{eqnarray}
\theta_{\mathrm{opt}}\in\left(b_0^{-1/(\delta+1)}-1,b_0^{-1/\delta}\right), \nonumber \\
b_0=\lambda c_d r_0^d\delta C(\delta)(N-1)^{\delta-1}N^{-\delta}.\label{equ:threshold_bounds}
\end{eqnarray}
\end{thm}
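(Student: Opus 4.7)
The plan is to work with $\log(D(N)/\log_2(1+\theta))$ rather than the ratio itself, since differentiating a sum is cleaner than differentiating a quotient, and then to specialize the resulting stationarity condition in each of the two asymptotic regimes.

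First, in the noise-limited regime I set $A=0$ so that $D(N)=N\exp(\theta r_0^{\alpha}WN_0/N)$, and the objective reduces to $N\exp(c\theta)/\log_2(1+\theta)$ with $c=r_0^{\alpha}WN_0/N$. Differentiating $c\theta-\ln\ln(1+\theta)$ with respect to $\theta$ and setting it to zero yields the equation
\begin{equation*}
(1+\theta)\ln(1+\theta)=\frac{1}{c}=\frac{N}{r_0^{\alpha}WN_0}.
\end{equation*}
Substituting $u=\ln(1+\theta)$ transforms this into $ue^{u}=N/(r_0^{\alpha}WN_0)$, which is exactly the defining equation of the Lambert $\mathcal{W}$ function, giving the claimed closed form. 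I would also briefly check second-order conditions (or monotonicity of $(1+\theta)\ln(1+\theta)$) to confirm the stationary point is a minimum.

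Next, in the interference-limited regime I set $B=0$, so the exponent in $D(N)$ becomes $a_0\theta^{\delta}$ with $a_0=\lambda c_d r_0^{d}C(\delta)(N-1)^{\delta-1}N^{-\delta}$, and the stationarity condition in $\theta$ reads $a_0\delta\theta^{\delta-1}=1/((1+\theta)\ln(1+\theta))$, equivalently
\begin{equation*}
b_0\,\theta^{\delta-1}(1+\theta)\ln(1+\theta)=1,\qquad b_0=a_0\delta.
\end{equation*}
To extract the bounds, I would invoke the elementary inequalities $\theta/(1+\theta)\le\ln(1+\theta)\le\theta$ for $\theta>0$. The upper bound $\ln(1+\theta)\ge\theta/(1+\theta)$ turns the stationarity condition into $b_0\theta^{\delta}\le 1$, i.e.\ $\theta_{\mathrm{opt}}\le b_0^{-1/\delta}$. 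The inequality $\ln(1+\theta)\le\theta$ gives $b_0\theta^{\delta}(1+\theta)\ge 1$, and bounding $\theta^{\delta}<(1+\theta)^{\delta}$ yields $(1+\theta)^{\delta+1}>1/b_0$, i.e.\ $\theta_{\mathrm{opt}}>b_0^{-1/(\delta+1)}-1$. Combining the two produces exactly the stated interval.

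The only mildly delicate point is to argue that the stationary point is unique and is a minimum, which follows because $\theta\mapsto b_0\theta^{\delta-1}(1+\theta)\ln(1+\theta)$ is strictly monotonic on $(0,\infty)$ (the first factor decreases since $\delta<1$, but $(1+\theta)\ln(1+\theta)$ grows faster, so their product is strictly increasing from $0$ to $\infty$), and similarly $(1+\theta)\ln(1+\theta)$ is strictly increasing in the noise-limited case; hence the first-order condition pins down a unique critical point which must be the global minimizer since the objective diverges as $\theta\downarrow 0$ (through the $\log_2(1+\theta)$ in the denominator) and as $\theta\to\infty$ (through the exponential in the numerator). The rest is routine algebra.
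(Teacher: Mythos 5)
Your proposal is correct and follows essentially the same route as the paper: setting the derivative of the (log of the) normalized delay to zero yields the identical stationarity condition, which gives $(1+\theta)\ln(1+\theta)=N/(r_0^{\alpha}WN_0)$ and hence the Lambert-$\mathcal{W}$ form in the noise-limited regime, and the same bracketing via $\theta/(1+\theta)<\ln(1+\theta)<\theta$ in the interference-limited regime. The one point you should firm up is the monotonicity of $l(\theta)=\theta^{\delta-1}(1+\theta)\ln(1+\theta)$, which you assert heuristically; the paper establishes it by computing $l'(\theta)=((\delta-1)\theta^{\delta-2}+\delta\theta^{\delta-1})\ln(1+\theta)+\theta^{\delta-1}$ and bounding $\ln(1+\theta)$ to show $l'(\theta)>\delta\theta^{\delta-1}(1+\theta)>0$.
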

\begin{proof}
The derivative of the normalized mean local delay with respective to $\theta$ is
\begin{eqnarray}
\frac{\partial \left(\frac{D(N)}{\log_2(1+\theta)}\right)}{\partial\theta}&\!\!\!=\!\!\!&\frac{h(\theta)\theta^{\delta-1}N}{\log_2(1+\theta)}\exp\big(\theta r_0^\alpha WN_0N^{-1}\nonumber\\
&&\!\!\!\!\!\!\!\!\!\!\!\!\!\!\!\!+\lambda c_d r_0^d\theta^{\delta}C(\delta)(N-1)^{\delta-1}N^{-\delta}\big),
\end{eqnarray}
where $h(\theta)$ is as follows
\begin{eqnarray}
h(\theta)=\lambda c_d r_0^d\delta C(\delta)(N-1)^{\delta-1}N^{-\delta}\nonumber\\
+ r_0^\alpha WN_0N^{-1}\theta^{1-\delta}-\frac{1}{\theta^{\delta-1}(1+\theta)\ln(1+\theta)}.
\end{eqnarray}
Next, we prove that $h(\theta)$ is a strictly increasing function of $\theta$, then we show that the equation $h(\theta)=0$ has a unique solution. Let $l(\theta)=\theta^{\delta-1}(1+\theta)\ln(1+\theta)$ and the derivative of $l(\theta)$ is as follows
\begin{eqnarray}
l'(\theta)&=&((\delta-1)\theta^{\delta-2}+\delta\theta^{\delta-1})\ln(1+\theta)+\theta^{\delta-1}\nonumber \\
&>&((\delta-1)\theta^{\delta-2}+\delta\theta^{\delta-1})\theta+\theta^{\delta-1}\nonumber \\
&>&0.\nonumber
\end{eqnarray}
Thus, $l(\theta)$ is a strictly increasing function of $\theta$. This implies that $h(\theta)$ is also a strictly increasing function of $\theta$. Since $\lim_{\theta\rightarrow0^+}h(\theta)=-\infty$ and $\lim_{\theta\rightarrow\infty}h(\theta)=+\infty$, the equation $h(\theta)=0$ has a unique solution $\theta_{\mathrm{opt}}$ that minimizes the mean local delay.

In the noise-limited regime, the equation $h(\theta)=0$ has the form
\begin{eqnarray}
r_0^\alpha WN_0N^{-1}\theta^{1-\delta}-\frac{1}{\theta^{\delta-1}(1+\theta)\ln(1+\theta)}=0.\quad
\end{eqnarray}
Solving this equation we obtain (\ref{equ:threshold}).

In the interference-limited regime, the noise is ignored, and $h(\theta)=0$ has the form
\begin{eqnarray}
\lambda c_d r_0^d\delta C(\delta)(N-1)^{\delta-1}N^{-\delta}\nonumber\\
-\frac{1}{\theta^{\delta-1}(1+\theta)\ln(1+\theta)}=0. \label{equ:h}
\end{eqnarray}
A closed-form solution for the above equation does not exist.
By applying the inequalities $\frac{\theta}{1+\theta}<\ln(1+\theta)<\theta$ to (\ref{equ:h}), we get the following inequalities
\begin{eqnarray}
\frac{1}{(1+\theta)^{\delta+1}}<\frac{1}{\theta^{\delta}(1+\theta)}\nonumber\\
<\lambda c_d r_0^d\delta C(\delta)(N-1)^{\delta-1}N^{-\delta}<\frac{1}{\theta^{\delta}}.
\end{eqnarray}
From the above inequalities, we get the bounds in (\ref{equ:threshold_bounds})
\end{proof}

\subsection{ALOHA}
Based on the similarity between the expressions of the normalized mean local delays for ALOHA and for FHMA, we obtain the optimal threshold for ALOHA directly.
\begin{thm}
In noise-limited regime, the optimal threshold $\theta_{\mathrm{opt}}$ that minimizes the normalized mean local delay for ALOHA  is as follows
\begin{eqnarray}
\theta_\mathrm{opt}&=&\mathcal{W}\left(\exp\left(\frac{1}{r_0^{\alpha} WN_0}\right)\right)-1. \label{equ:threshold_aloha}
\end{eqnarray}
In interference-limited regime, the bounds of $\theta_{\mathrm{opt}}$ are given by
\begin{eqnarray}
\theta_{\mathrm{opt}}\in\left(b_0^{-1/(\delta+1)}-1,b_0^{-1/\delta}\right),\nonumber\\
 b_0=\lambda c_d r_0^d\delta C(\delta)p(1-p)^{\delta-1}. \label{equ:threshold_bounds_aloha}
\end{eqnarray}
\end{thm}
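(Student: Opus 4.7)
My plan is to mirror the proof of Theorem \ref{thm:theta_fh} for FHMA, adapted to the ALOHA expression for $\widetilde{D}(p)$ in (\ref{equ:localdelay_aloha}). Since the normalized mean local delay is $\widetilde{D}(p)/\log_2(1+\theta)$, I would differentiate with respect to $\theta$, factor out strictly positive quantities (including the exponential and the $1/(p\log_2(1+\theta))$ prefactor), and obtain an equation of the form $\tilde{h}(\theta)=0$ directly analogous to $h(\theta)=0$ in the FHMA proof. The only structural change is that in ALOHA the noise contribution $\theta r_0^\alpha W N_0$ is not scaled by $1/N$, and the coefficient multiplying the interference exponent is $\lambda c_d r_0^d \theta^{\delta} C(\delta)\, p(1-p)^{\delta-1}$ in place of $\lambda c_d r_0^d \theta^{\delta} C(\delta)(N-1)^{\delta-1}N^{-\delta}$.

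Next I would verify that $\tilde{h}$ is strictly increasing in $\theta$ with $\lim_{\theta\to 0^+}\tilde{h}(\theta)=-\infty$ and $\lim_{\theta\to\infty}\tilde{h}(\theta)=+\infty$, so a unique root $\theta_{\mathrm{opt}}$ exists. The monotonicity reduces to the same auxiliary function $l(\theta)=\theta^{\delta-1}(1+\theta)\ln(1+\theta)$ used in Theorem \ref{thm:theta_fh}, since the only $\theta$-dependent pieces of $\tilde{h}$ besides an additive $\theta^{1-\delta}$ term enter through $-1/l(\theta)$, and the coefficients that depend on $p$ are positive constants with respect to $\theta$.

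In the noise-limited regime the interference term drops out and $\tilde{h}(\theta)=0$ reduces to $(1+\theta)\ln(1+\theta)=1/(r_0^\alpha W N_0)$; substituting $u=1+\theta$ and $v=\ln u$ converts this to $ve^v = 1/(r_0^\alpha W N_0)$, which inverts through the defining property $\mathcal{W}(z)e^{\mathcal{W}(z)}=z$ of the Lambert $\mathcal{W}$ function and yields (\ref{equ:threshold_aloha}) after rearrangement. In the interference-limited regime the noise term is dropped and the sandwich $\theta/(1+\theta)<\ln(1+\theta)<\theta$ applied to $1/\bigl(\theta^{\delta-1}(1+\theta)\ln(1+\theta)\bigr)$ gives
\[
\frac{1}{(1+\theta)^{\delta+1}} < b_0 < \frac{1}{\theta^{\delta}},
\]
with $b_0$ as in (\ref{equ:threshold_bounds_aloha}); inverting each inequality produces the two-sided bound in (\ref{equ:threshold_bounds_aloha}).

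The main obstacle is clerical rather than conceptual: ensuring that the ALOHA-specific coefficients (the coupling $p(1-p)^{\delta-1}$ together with the un-scaled noise term $\theta r_0^\alpha W N_0$) propagate correctly through the derivative computation and the Lambert $\mathcal{W}$ substitution. Once these substitutions are handled carefully, every step is essentially a direct transcription of the FHMA argument, which is why the proof can reasonably be omitted in the main text.
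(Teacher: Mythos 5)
Your overall strategy is exactly the one the paper intends --- the paper omits this proof entirely, appealing to the similarity with Theorem \ref{thm:theta_fh} --- and your derivation of the interference-limited bounds (the sandwich via $\frac{\theta}{1+\theta}<\ln(1+\theta)<\theta$ applied to $b_0=1/\bigl(\theta^{\delta-1}(1+\theta)\ln(1+\theta)\bigr)$) is correct. The problem is in the final step of the noise-limited case. From your own equation $(1+\theta)\ln(1+\theta)=1/(r_0^\alpha WN_0)$, the substitution $v=\ln(1+\theta)$ gives $ve^v=1/(r_0^\alpha WN_0)$, hence $v=\mathcal{W}\bigl(1/(r_0^\alpha WN_0)\bigr)$ and
\[
\theta_{\mathrm{opt}}=\exp\left(\mathcal{W}\left(\frac{1}{r_0^{\alpha} WN_0}\right)\right)-1,
\]
i.e., formula (\ref{equ:threshold}) with $N$ replaced by $1$ inside the argument, keeping the composition $\exp(\mathcal{W}(\cdot))-1$. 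This is \emph{not} the expression (\ref{equ:threshold_aloha}) in the theorem statement, which has the composition reversed: $\mathcal{W}(\exp(c))-1$ solves $\theta+\ln(1+\theta)=c-1$, a different equation (numerically, for $r_0^\alpha WN_0=1$ the two expressions give approximately $0.76$ versus $0$). So your assertion that the Lambert inversion ``yields (\ref{equ:threshold_aloha}) after rearrangement'' does not hold as written. You should either correct your final line to the $\exp(\mathcal{W}(\cdot))-1$ form, or explicitly flag that the theorem as printed appears to transpose $\exp$ and $\mathcal{W}$ (it is inconsistent with the paper's own FHMA result, of which the ALOHA case is just the unscaled-noise specialization); as it stands, the proof does not establish the statement it claims to prove.
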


\section{Design Insights}
\label{sec:design}
\subsection{Mean Delay-Jitter Tradeoff}
The jitter of delay, typically characterized by the packet delay variation, is defined in \cite{demichelis2002ip} and  \cite{demichelis2000packet}.
In the system design, the delay variation is an important measure that characterizes the fluctuation of delay \cite{152873}.
For interactive real-time applications, e.g., VoIP, large delay variance can be a serious issue.
To the best of our knowledge, the variance of local delay has not been explored in the existing work.
The optimal value of $N$ that minimizes the mean local delay is often not the one that minimizes the variance; thus there is a tradeoff between the mean and the variance of the local delay.
Fig. \ref{fig:delay_jitter} and Fig. \ref{fig:delay_jitter_aloha} visualize the relationship between the mean and the variance of the normalized local delay for FHMA and for ALOHA respectively.
From Fig. \ref{fig:delay_jitter} we observe that in FHMA the favorable operating point has a reasonably wide tuning range because the variance stabilizes fast as $N$ increases.
In contrast, we observe from Fig. \ref{fig:delay_jitter_aloha} that in ALOHA the curves turn sharply.

\begin{figure}[!ht]
  \centering
  \subfigure[FHMA]{
    \label{fig:delay_jitter} 
    \includegraphics[width=0.5\textwidth]{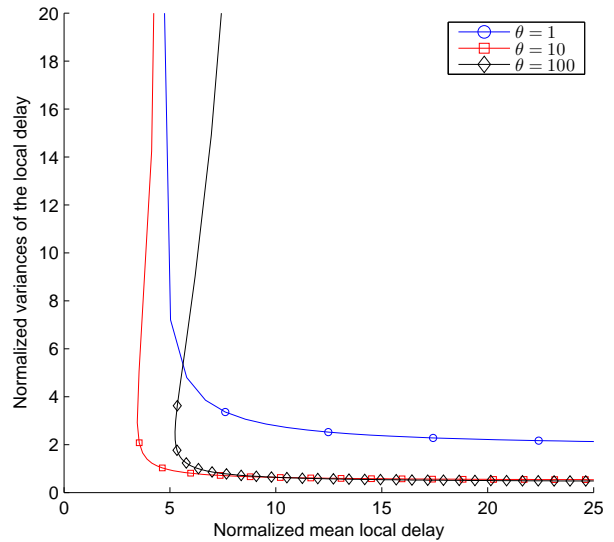}}
  \subfigure[ALOHA]{
    \label{fig:delay_jitter_aloha} 
    \includegraphics[width=0.5\textwidth]{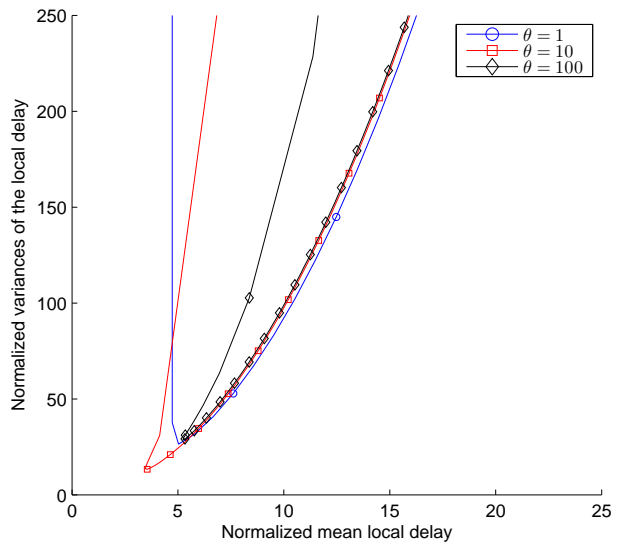}}
  \caption{Mean delay-jitter tradeoff.}
  \label{fig:delay_jitter_tradeoff} 
\end{figure}

\subsection{Tail Probability of the Local Delay}
The tail probability is an important measure of the system performance since one may require (as a QoS constraint) that the probability that the local delay exceeds a certain threshold is less than a predefined value.
Based on the mean and variance we have derived and by applying the one-tailed Chebyshev's inequality, we obtain an upper bound for the tail probability.

For example, in FHMA with $N>1$, letting $X=\sum_{i=1}^{N}\Delta_i$ be the local delay, the tail probability is upper bounded as follows
\begin{equation}
\!\!\!\!\mathbb{P}\{X>T_0\}\leq\frac{V(N)}{V(N)+(T_0-D(N))^2}, \quad \mathrm{for} \quad T_0>D(N).\\
  \label{equ:chebyshev}
\end{equation}
For example, if we let the design requirement be that the probability that the local delay exceeds $10$ is less than $5\%$.
Then, when the threshold of the local delay is fixed as $T_0=10$, the upper bound of the tail probability given by (\ref{equ:chebyshev}) with varying $N$ is shown in Fig. \ref{fig:tailprob}.
We observe that in order to achieve the probability $5\%$, the number of sub-bands $N$ for the case when $\theta=1,10,100$ should be chosen larger than $15,50,95$ respectively.

The bounds based on Chebyshev's inequality will typically not provide the tightest bounds. However, it generally cannot be improved if only the mean and variance are available. On the other hand, if further statistical information is provided, a number of methods may be developed to improve the sharpness of the bounds, for example, through the use of semivariances if some samples are available, or through the use of Bhattacharyya's inequality or large-deviations based inequalities if higher moments or even the moment generating functions are available.

\begin{figure}[!ht]
\centering
\includegraphics[width=0.5\textwidth]{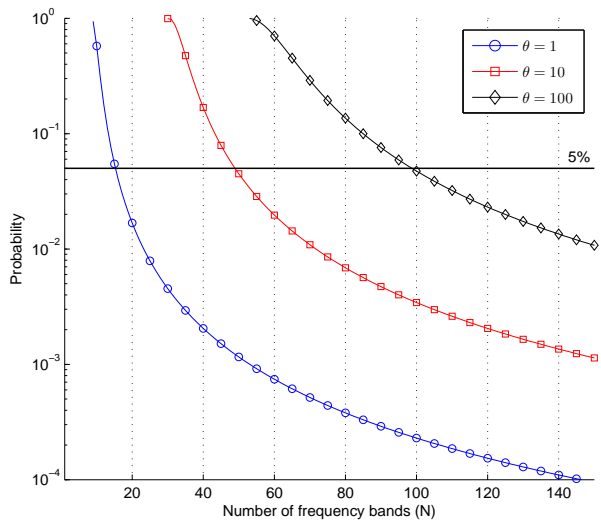}
\caption{Upper bounds of the tail probability of local delay given by (\ref{equ:chebyshev}) as a function of the number of sub-bands $N$ when fixing $T_0=10$ in FHMA.}
\label{fig:tailprob}
\end{figure}

\section{Conclusions}
\label{sec:conclusion}
In this work, we studied the problem of reducing the effect of interference correlation by introducing MAC dynamics.
We derived the mean and variance of the local delay and evaluated how the interference correlation can be reduced by FHMA and ALOHA.
We also evaluated the optimal number of sub-bands in FHMA and the optimal transmit probability in ALOHA that minimize the mean local delay.

The results reveal that there exist two operation regimes for the network, the correlation-limited regime and bandwidth-limited regime, which are separated by the optimal number of sub-bands in FHMA and the optimal transmit probability in ALOHA.
If no MAC dynamics is employed, the local delay has a heavy tail distribution which results in infinite mean local delay; meanwhile, employing FHMA and ALOHA will greatly decrease the mean local delay.
By comparing the results of FHMA and ALOHA, we observed that while the mean local delays of the two protocols are the same for certain parameters, the variances are rather different.
According to the results established herein, FHMA outperforms ALOHA if implementation costs like overhead are not taken into consideration; however, when considering the implementation costs, the overhead for FHMA may be much higher than ALOHA because in FHMA each transmitter should inform the corresponding receiver which sub-band to listen on.

\section*{Acknowledgement}

The authors wish to thank the anonymous reviewers for their constructive comments.

\bibliographystyle{IEEEtran}
\bibliography{123}

\begin{IEEEbiography}[{\includegraphics[width=1in,height=1.25in,clip,keepaspectratio]{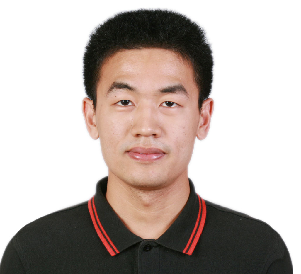}}]{Yi Zhong}
Yi Zhong received his B.S. degree in Electronic Engineering from University of Science and Technology of China (USTC) in 2010. He is now a Ph.D. student in Electronic Engineering at USTC, Hefei, China. From August to December 2012, he was a visiting student in Prof. Martin Haenggi's group at University of Notre Dame. From July to October 2013, he worked as an intern in Qualcomm, Corporate Research and Development, Beijing. His research interests include heterogeneous and femtocell-overlaid cellular networks, wireless ad hoc networks, stochastic geometry and point process theory.
\end{IEEEbiography}

\begin{IEEEbiography}[{\includegraphics[width=1in,height=1.25in,clip,keepaspectratio]{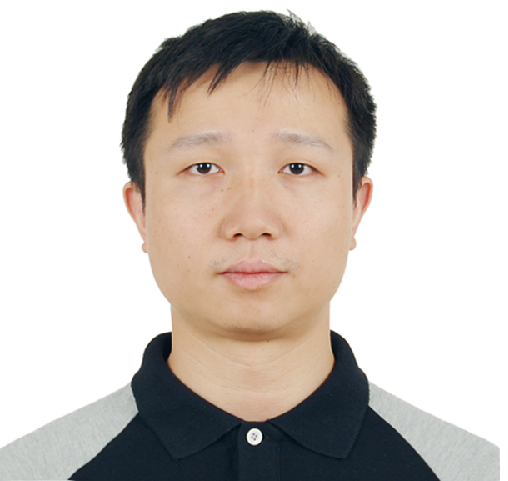}}]{Wenyi Zhang}
Wenyi Zhang (S'00, M'07, SM¡¯11) received the B.E. degree in automation from Tsinghua University, Beijing, China, in 2001, and the M.S. and Ph.D. degrees in electrical engineering both from the University of Notre Dame, Notre Dame, IN, in 2003 and 2006, respectively. He was affiliated with University of Southern California as a Postdoctoral Research Associate, and with the Qualcomm Corporate Research and Development, Qualcomm Incorporated. He is currently on the faculty of Department of Electronic Engineering and Information Science, University of Science and Technology of China. His research interests include wireless communications and networking, information theory, and statistical signal processing.
\end{IEEEbiography}

\begin{IEEEbiography}[{\includegraphics[width=1in,height=1.25in,clip,keepaspectratio]{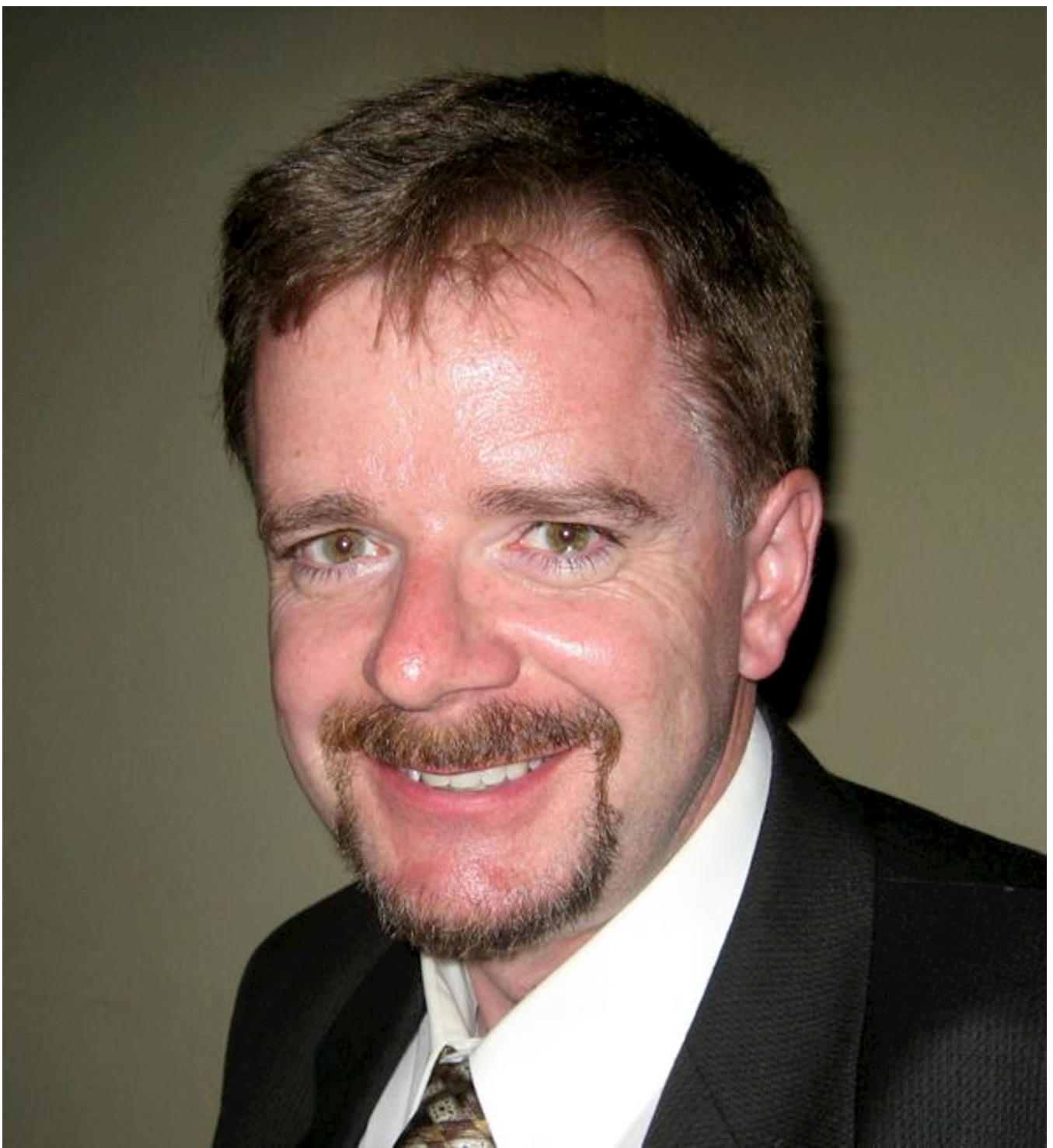}}]{Martin Haenggi}
Martin Haenggi (S-95, M-99, SM-04) is a Professor of Electrical Engineering and a Concurrent Professor of Applied and Computational Mathematics and Statistics at the University of Notre Dame, Indiana, USA. He received the Dipl.-Ing. (M.Sc.) and Dr.sc.techn. (Ph.D.) degrees in electrical engineering from the Swiss Federal Institute of Technology in Zurich (ETH) in 1995 and 1999, respectively. After a postdoctoral year at the University of California in Berkeley, he joined the University of Notre Dame in 2001. In 2007-2008, he spent a Sabbatical Year at the University of California at San Diego (UCSD). For both his M.Sc. and Ph.D. theses, he was awarded the ETH medal, and he received a CAREER award from the U.S. National Science Foundation in 2005 and the 2010 IEEE Communications Society Best Tutorial Paper award.
He served an Associate Editor of the Elsevier Journal of Ad Hoc Networks from 2005-2008, of the IEEE Transactions on Mobile Computing (TMC) from 2008-2011, and of the ACM Transactions on Sensor Networks from 2009-2011, as a Guest Editor for the IEEE Journal on Selected Areas in Communications in 2008-2009 and the IEEE Transactions on Vehicular Technology in 2012-2013, and as a Steering Committee Member for the TMC. Presently he is the chair of the Executive Editorial Committee of the IEEE Transactions on Wireless Communications. He also served as a Distinguished Lecturer for the IEEE Circuits and Systems Society in 2005-2006, as a TPC Co-chair of the Communication Theory Symposium of the 2012 IEEE International Conference on Communications (ICC'12), and as a General Co-chair of the 2009 International Workshop on Spatial Stochastic Models for Wireless Networks (SpaSWiN'09) and the 2012 DIMACS Workshop on Connectivity and Resilience of Large-Scale Networks, and as the Keynote Speaker of SpaSWiN'13. He is a co-author of the monograph "Interference in Large Wireless Networks" (NOW Publishers, 2009) and the author of the textbook "Stochastic Geometry for Wireless Networks" (Cambridge University Press, 2012). His scientific interests include networking and wireless communications, with an emphasis on ad hoc, cognitive, cellular, sensor, and mesh networks.
\end{IEEEbiography}
\end{document}